\definecolor{see}{RGB}{67,75,179}
\definecolor{darksee}{RGB}{42,44,148}
\definecolor{honey}{RGB}{232,180,129}
\definecolor{lighthoney}{RGB}{255,254,220}
\definecolor{citecol}{rgb}{0.5,0,0} 
\newcommand{\CE}{\mathfrak{CE}}  
\newcommand{\E}{\mathfrak{E}}
\newcommand{\V}{\mathfrak{V}}
\newcommand{\BV}{\mathfrak{BV}}
\newcommand{\fA}{\mathfrak{A}}
\newcommand{\D}{\mathfrak{D}}
\newcommand{\F}{\mathfrak{F}}
\newcommand{\frakg}{\mathfrak{g}}
\newcommand{\Gcal}{\mathcal{G}}  
\newcommand{\Ccal}{\mathcal{C}}
\newcommand{\Dcal}{\mathcal{D}}
\newcommand{\Ecal}{\mathcal{E}}
\newcommand{\Ocal}{\mathcal{O}}
\newcommand{\Scal}{\mathcal{S}}
\newcommand{\Tcal}{\mathcal{T}}
\newcommand{\Ci}{\mathcal{C}^\infty} 
\newcommand{\obj}{\mathrm{Obj}}
\newcommand{\Loc}{\mathrm{\mathbf{Loc}}}       
\newcommand{\Vect}{\mathrm{\mathbf{Vec}}}       
\newcommand{\WF}{\mathrm{WF}}         
\newcommand{\id}{\mathrm{id}}               
\newcommand{\supp}{\mathrm{supp}}      
\newcommand{\tr}{\mathrm{tr}}                 
\newcommand{\loc}{\mathrm{loc}}
\newcommand{\reg}{\mathrm{reg}}
\newcommand{\pg}{\mathrm{pg}}
\newcommand{\af}{\mathrm{af}}
\newcommand{\ta}{\mathrm{ta}}
\newcommand{\gh}{\mathrm{gh}}
\newcommand{\mc}{{\mu\mathrm{c}}}
\newcommand{\ml}{\mathrm{ml}}
\newcommand{\ex}{\mathrm{ext}}
\newcommand{\NN}{\mathbb{N}}          
\newcommand{\RR}{\mathbb{R}}           
\newcommand{\al}{\alpha}
\newcommand{\bet}{\beta}
\newcommand{\ga}{\gamma}
\newcommand{\Ga}{\Gamma}
\newcommand{\de}{\delta}
\newcommand{\De}{\Delta}
\newcommand{\la}{\lambda}
\newcommand{\La}{\Lambda}
\newcommand{\ph}{\varphi}
\newcommand{\T}{\cdot_{{}^\Tcal}}
\newcommand{\TR}{\cdot_{{}^{\TTR}}}
\newcommand{\TT}{\Tcal}
\newcommand{\TTR}{\Tcal_H}
\newcommand{\Tcirc}{{\bigcirc\!\!\!\!\!{\T}}}
\newcommand{\paqft}{{p\textsc{aqft}}}
\newcommand{\eom}{{\textsc{eom}}}
\newcommand{\qme}{{\textsc{qme}}}
\newcommand{\cme}{{\textsc{cme}}}
\newcommand{\mwi}{{\textsc{mwi}}}
\newcommand{\sst}[1]{\scriptscriptstyle{#1}}  
\newcommand{\hinv}{*^{\!\sst{-\!1}}}             
\newcommand{\minus}{\sst{-1}}   
\newcommand{\pa}{\partial}                              
\newcommand{\be}{\begin{equation}}
\newcommand{\ee}{\end{equation}}
\newcommand{\Lap}{\bigtriangleup}
\newcommand{\os}{\stackrel{\mathrm{o.s.}}{=}}
\newcommand{\osV}{\stackrel{\mathrm{o.s._V}}{=}}
\newcommand{\Pei}[2]{\lfloor #1, #2 \rfloor}
\newcommand{\skal}[2]{\left< #1 , #2 \right>}
\newcommand{\dgr}{{\sst\ddagger}}
\author{\null\\Katarzyna Rejzner \\
  \null\\
  \null\\
        \small{University of York}\\
    \small{York YO10 5DD}\\ 
\small{\texttt{kasia.rejzner@york.ac.uk}}}
\title{Remarks on local symmetry invariance in\\ perturbative algebraic quantum field theory}
\begin{document}
\date{}
 \maketitle

  \theoremstyle{plain}
  \newtheorem{df}{Definition}[section]
  \newtheorem{thm}[df]{Theorem}
  \newtheorem{prop}[df]{Proposition}
  \newtheorem{cor}[df]{Corollary}
  \newtheorem{lemma}[df]{Lemma}
  \newtheorem{exa}[df]{Example}
  
  \theoremstyle{plain}
  \newtheorem*{Main}{Main Theorem}
  \newtheorem*{MainT}{Main Technical Theorem}

  \theoremstyle{definition}
  \newtheorem{rem}[df]{Remark}
 \theoremstyle{definition}
  \newtheorem{ass}{\underline{\textit{Assumption}}}[section]

\abstract{We investigate various aspects of invariance under local symmetries in the framework of perturbative algebraic quantum field theory (pAQFT). Our main result is the proof that the quantum Batalin-Vilkovisky  (BV) operator, on-shell, can be written as the commutator with the interacting BRST charge. Up to now, this was proven only for a certain class of fields in quantum electrodynamics and in Yang-Mills theory. Our result is more general and it holds in a wide class of theories with local symmetries, including general relativity and the bosonic string. We also comment on other issues related to local gauge invariance and, using the language of homological algebra, we compare different approaches to quantization of gauge theories in the pAQFT framework.}
\tableofcontents
\markboth{Contents}{Contents}
\section{Introduction}
Perturbative algebraic quantum field theory ({\paqft}) is a mathematical framework developed during the last two decades to study problems in perturbative renormalization. It proved to be very useful in constructing models in quantum field theory on curved spacetimes, because the operator algebraic approach allows one to separate the construction of the algebra of observables from the construction of a state. Research in {\paqft} is focused on two main problems: developing methods for renormalization on general globally hyperbolic backgrounds \cite{BF0,BFV,FR,FR3,H,HW,HW2,HW3,HW5}, and identifying algebraic structures appearing in perturbative renormalization on Minkowski spacetime \cite{BDF,DF,Duetsch:2000nh,DF02,DF04,Kai,FR,FR3,Rej}. An important step towards the consistent axiomatic framework for QFT on curved spacetimes was introducing the notion of \textit{general local covariance} \cite{BFV,HW}. In \cite{BFV} this notion is formulated in the language of category theory. The axiomatic framework proposed in \cite{BFV} is a generalization of the Haag-Kastler axioms \cite{HK} of local quantum field theory. In the {\paqft} approach, the main tool  we use to investigate the algebraic structures appearing in renormalization is the Epstein-Glaser \cite{EG} method. It allows us to prove the existence of renormalized quantities (time-ordered products) without having to manipulate ill defined objects in the intermediate steps of quantization.

Initially the {\paqft} framework was developed for scalar fields, but recently there has been a lot of progress in constructing more complicated models. In particular, quantum electrodynamics (QED) was discussed in \cite{DF99} and Yang-Mills theory was discussed in \cite{Boas,H}. A general framework which deals with arbitrary theories with local symmetries was subsequently proposed in \cite{FR,FR3,Rej11b}. This setting makes use of the Batalin-Vilkovisky formalism, which relies on homological algebra methods. In 
\cite{FR,FR3} these algebraic tools are refined by introducing functional-analytic aspects and generalizing the BV formalism to infinite dimensional spaces. In \cite{FR3} a general quantization scheme for gauge theories is proposed and some comparison with the approach of \cite{H} is made. In the present work we want to continue this line. We discuss various aspects of local gauge invariance in {\paqft}, pointing out differences and common features of existing approaches. The framework proposed in \cite{FR,FR3} is a very convenient tool for such analysis, since it is general and flexible enough. We focus our discussion on two problems: 
 general formulation of consistency conditions that have to be satisfied by the deformed $\star$-product in order to be compatible with structures appearing in the BV formalism, and the definition and intrinsic meaning of the free and the interacting BRST charge.
Our main result is the proof that the interacting BRST charge $R_V(Q)$ ($R_V$ denotes the derivative of the relative S-matrix) generates on-shell the quantum BV operator $\hat{s}=s-i\hbar\Lap_V$, defined in \cite{FR3}, i.e.
\be\label{main:1}
\frac{i}{\hbar}[R_V(F),R_V(Q)]_\star=R_V(\hat{s} F)
\ee
holds for local $F$, modulo the free equations of motion. Our proof generalizes results obtained in \cite{H}, since we do not restrict to $F$'s for which the renormalized BV Laplacian $\Lap_V(F)$ vanishes. Moreover, our result can be applied to a larger class of theories, including gravity \cite{BFR} and the bosonic string \cite{BRZ}.
 Using the interacting star product one can formulate this result also as
\be\label{main:2}
\frac{i}{\hbar}[V,Q]_{\star_V}=\hat{s}F\,,
\ee
modulo interacting equations of motion.

The paper is organized as follows: in the first section we construct the classical theory in the framework of {\paqft}, in the second section we perform the quantization and in the last section we discuss the BRST charge. The content of the first section is essentially a brief summary of the formalism introduced in \cite{FR}, but we formulate it here in the language of graded differential geometry. We also prove some properties of the renormalized BV Laplacian $\Lap_V$. The second section starts with a detailed discussion of consistency conditions that the star product has to fulfill in order to be compatible with the free BRST operator. We show how this relates to the free quantum master equation {\qme}, and we identify intrinsic reasons for the conditions to arise. Such consistency conditions are necessary, if one works with the linearized BV operator. This is an argument in favor of the approach proposed by K.~Fredenhagen and myself in \cite{FR3}, where we work with the full BV operator, instead. In  \cite{FR3}, we construct interacting fields from free ones by means of the intertwining map $R_V$; we include $\theta_0$ (the part of the Lagrangian which generates the free BRST operator) in the interaction term $V$, and the free action $S_0$ doesn't contain antifields. In subsection \ref{changing} we show how the theory whose starting point action is $S_0$ relates to the theory whose starting point action is $S_0+\theta_0$. This concludes the second section of the paper. In the third section we discuss the BRST charge. We prove relations \eqref{main:1} and \eqref{main:2}, and discuss in detail differences between approaches to gauge theory quantization taken in \cite{DF99} and \cite{H}.
\section{Classical field theory}
\subsection{Kinematical structure}
We start with the kinematical structure. Let  $M$  be an oriented, time-oriented globally hyperbolic spacetime. We associate to $M$ the space  $\E(M)$, of field configurations of the theory. $\E(M)$ describes the physical content of the theory, i.e. specifies what kind of objects the theory is describing. The results of the present work can be applied to a very general class of theories, including Yang-Mills theory and gravity.  We only assume that the configuration space is a space of smooth sections of some natural vector bundle $E\xrightarrow{\pi} M$ with fiber $V$ over $M$, i.e. $\E(M)=\Gamma(E)\equiv\Gamma(M,V)$. Let $\E_c(M)$ denote the space of compactly supported configurations and $\D(M)\doteq\Ci_0(M,\RR)$. A classical measurement associates, to a configuration in $\E(M)$, a real number. Therefore it is natural to identify classical observables with functionals $F:\E(M)\to\RR$. 
We require these functionals to be smooth in the sense of calculus on locally convex topological vector spaces. Let us briefly recall the relevant definitions. The derivative of $F$ at $\ph\in\E(M)$ in the direction of $\psi\in\E(M)$ is defined by
\be\label{de}
\left<F^{(1)}(\ph),\psi\right> \doteq \lim_{t\rightarrow 0}\frac{1}{t}\left(F(\ph + t\psi) - F(\ph)\right)\,,
\ee
whenever the limit exists. The function $F$ is called differentiable at $\ph\in\E(M)$ if $\left<F^{(1)}(\ph),\psi\right>$ exists for all $\psi \in \E(M)$. It is called continuously differentiable if it is differentiable at all points of $\E(M)$ and
$F^{(1)}(.):\E(M)\times \E(M)\rightarrow \RR, (\ph,\psi)\mapsto \left<F^{(1)}(\ph),\psi\right>$
is a continuous map. It is called a $\Ccal^1$-map if it is continuous and continuously differentiable. Higher derivatives are defined in a similar way.  The continuity condition for derivatives implies that $F^{(n)}(\ph)\in \Gamma'(M^n,V^{\otimes n})$ holds  for all $\ph\in\E(M)$, $n\in\NN$, so  $F^{(n)}(\ph)$ is a distributional section with compact support.

An important property of a functional $F$ is its spacetime support. It is defined by
\begin{align}\label{support}
\supp\, F=\{ & x\in M|\forall \text{ neighbourhoods }U\text{ of }x\ \exists \ph_1,\ph_2\in\E(M), \supp\, \ph_2\subset U 
\\ & \text{ such that }F(\ph_1+\ph_2)\not= F(\ph_1)\}\ .\nonumber
\end{align}
Another crucial property of a functional is \textit{the locality}. According to the standard definition it means that the functional $F$ is of the form:
\[
F(\ph)=\int\limits_M f(j_x(\ph))\,d\mu(x)\,,
\]
where $f$ is a function on the jet space over $M$, $j_x(h)=(x,\ph(x),\pa \ph(x),\dots)$ is the jet of $\ph$ at the point $x$ and $d\mu(x)$ denotes the invariant measure on $M$ induced by the metric. The space of compactly supported smooth local functions $F:\E(M)\to\RR$ is denoted by $\F_\loc(M)$. The algebraic completion of $\F_\loc(M)$ with respect to the pointwise product
\be\label{prod}
F\cdot G(h)=F(h)G(h)
\ee
is a commutative algebra $\F(M)$, consisting of finite sums of finite products of local functionals. We call this space \textit{the algebra of multilocal functionals}. Both $\F_\loc$ and $\F$ are covariant functors from $\Loc$ (the category of globally hyperbolic oriented and time-oriented spacetimes with causal isometric, (time)-orientation preserving embeddings as morphisms) to the category $\Vect$ of locally convex vector spaces.
\subsection{Dynamics and symmetries}
Following \cite{BDF}  we introduce the dynamical principle by means of a generalized Lagrangian. Let  $L$ be a natural transformation between the functor of test function spaces $\D$, and the functor $\F_\loc$. For each $M\in\obj(\Loc)$ we have a morphism $L_M:\D(M)\rightarrow \F_\loc(M)$ in $\Vect$. $L$ is a generalized Lagrangian if all these morphisms, numbered by objects of $\Loc$, satisfy
\be\label{L:supp}
\supp(L_M(f))\subseteq \supp(f)
\ee
and the additivity rule 
\be\label{L:add}
L_M(f+g+h)=L_M(f+g)-L_M(g)+L_M(g+h)\,,
\ee
where $f,g,h\in\D(M)$ and $\supp\,f\cap\supp\,h=\emptyset$.  
The action $S(L)$ is defined as an equivalence class of Lagrangians  \cite{BDF}, where two Lagrangians $L_1,L_2$ are called equivalent $L_1\sim L_2$  if
\be\label{equ}
\supp ({L_{1}}_M-{L_{2}}_M)(f)\subset\supp\, df\,, 
\ee
for all $f\in\D(M)$. Let us fix a Lagrangian $L_{\textrm{ph}}$, defining our physical theory. 
To derive the equations of motion, we follow the approach of \cite{BDF} and define the Euler-Lagrange derivative of $S_{\textrm{ph}}$ as a natural transformation ${S'_{\textrm{ph}}}:\E\to\D'$ given by
\be\label{ELd}
\left<({S_{\textrm{ph}}}')_M(\ph),\psi\right>=\left<(L_{\textrm{ph}})_M(f)^{(1)}(\ph),\psi\right>\,,
 \ee
 where $f\equiv 1$ on $\supp \psi$. The equation
\be
 S_{\textrm{ph}}'(\ph)\equiv0\,.\label{eom}
\ee
is called \textit{the equation of motion} (\eom). The space of solutions of \eqref{eom} is a subspace of $\E(M)$ denoted by $\E_S(M)$. In the physics literature one calls functionals on $\E_S(M)$ \textit{on-shell functionals}. Analogously, equalities that hold for functions restricted to $\E_S(M)$ are called \textit{on-shell} equalities. 

Local symmetries of the action $S_{\textrm{ph}}$ are described by certain vector fields on $\E(M)$. We want to consider only variations in the directions of compactly supported configurations, so the corresponding space of vector fields can be identified with
 \[
 \V(M)=\{X:\mathfrak{E}(M)\to\E_c(M)| X\text{ smooth with compact support} \}\,.
 \]
 $X\in\V(M)$ is a symmetry if
\[
(\partial_X S_{\textrm{ph}})(\ph)\equiv 0,\ \forall\ph\in\E(M)\,,
\]
where
\[
(\partial_X S_{\textrm{ph}})(\ph)\doteq \left<(L_{\textrm{ph}}(f))^{(1)}(\ph),X(\ph)\right>\,,\quad f\equiv 1\ \textrm{on}\ \supp X\,.
\]
The space of all symmetries of the given action has a structure of an infinite dimensional Lie algebra. In case of gauge theories and gravity this space can be characterized in a simple way. There exists a space of smooth sections of some vector bundle $\frakg_c(M)=\Gamma(M,g)$ which carries a structure of a Lie algebra and there is a Lie algebra morphism $\rho:\frakg_c(M)\rightarrow \Gamma_c (T\E(M))$ such that
every symmetry $X$ can be expressed as $X(\ph)=\rho(\xi_{\ph})(\ph)+I$, where $\xi_{\ph}\in \frakg_c(M)$ and $I$ is a trivial symmetry  (a vector field that vanishes identically on $\E_S(M)$). In cases which we consider, $\frakg_c(M)$ arises as a Lie algebra of an infinite dimensional Lie group $\Gcal(M)$, called the gauge group. Since we work on a fixed background, from now on we will keep the argument ``$(M)$'' implicit, whenever this doesn't create confusion. 

In order to quantize the theory we need a way to characterize the space of functionals invariant under the symmetries of $S_{\mathrm{ph}}$. In \cite{FR} it was shown how to achieve this using an appropriate extension of the Batalin-Vilkovisky (BV) formalism. In the first step one constructs the space of alternating multilinear forms (the so-called ghosts) on  $\frakg_c$ with values in $\F$. In addition, we require multilocality and compact support, so we consider the space $\CE\doteq\Ci_\ml(\E,\Lambda\frakg')$. It is a graded algebra and the corresponding grading is called the pure ghost number $\#\pg$. In the topology described in \cite{FR}, $\CE$ is the completion of $\F\otimes\Lambda\frakg'$ and therefore we interpret it as the space of functions on the infinite dimensional graded manifold $\overline{\E}\doteq\E\oplus\frakg[1]$ (the number in square bracket denotes the shift in degree).

The Chevalley-Eilenberg  differential $\gamma_{\mathrm{ce}}$ is defined in the standard way \cite{ChE} and can be identified with the exterior derivative on the space of gauge equivariant forms on the gauge group $\Gcal$. The 0-th cohomology of $\gamma_{\mathrm{ce}}$ is the space of gauge invariant functionals\footnote{In the case of gravity one needs to define this cohomology not on the level of functionals, but fields, i.e. natural transformations from $\D$ to $\F$. This is discussed in details in \cite{FR}.}. The Batalin-Vilkovisky algebra $\BV$ is the graded symmetric tensor algebra of graded derivations of $\CE$. Again we require that elements of $\BV$ considered as smooth maps on $\E$ with values in a certain graded algebra are multilocal and compactly spacetime supported. The resulting space is
\[
\BV\doteq\Ci_\ml\big(\E,\La\E_c\widehat{\otimes}\La{\frakg}'\widehat{\otimes}S^\bullet \frakg_c\big)\,.\label{BVfix}
\]
This is again a completion of $\F\otimes\La\E_c\otimes\La{\frakg}'\otimes S^\bullet \frakg_c$, so
we can interpret the elements of the above space as functionals on 
\[
\E[0]\oplus\frakg[1]\oplus \E_c'[-1]\oplus\frakg'_c[-2]\,,
\]
which is the odd cotangent bundle $\Pi T^*(\overline{\E})$ of the extended configuration space $\overline{\E}=\E[0]\oplus\frakg[1]$,
where the manifold structure on $\E\oplus\frakg[1]$ is defined by the basis of neighborhoods with the topology of $\E_c\oplus\frakg_c$. For simplicity we denote by $\ph^\al$ an element of $\overline{\E}$ and the index $\al$ runs through all the physical and ghost indices. The full field multiplet will be denoted by $\ph$ and an evaluation functional on $\overline{\E}$ will be written as $\Phi_x^\al$. Functions on the graded vector space $\Pi T^*(\overline{\E})$ are the graded multivector fields on $\overline{\E}$ and we can write  example elements of $\BV$ in the form
\be
\label{Polynom}
F=\int d\mu(x_1,\dots,x_{m}) f_F(x_1, \dots ,x_{m})\Phi_{x_1}\!\dots\Phi_{x_k}  \tfrac{\delta}{\delta \ph(x_{k+1})} \dots   \tfrac{\delta}{\delta \ph(x_{m})}\,,
\ee
where $d\mu(x_1,\dots,x_n)$ denotes the measure $d\mu(x_1)\dots d\mu(x_n)$, we keep the summation over  the indices $\alpha$ implicit and the product denoted by juxtaposition is the graded associative product of $\BV$. We can treat the functional derivatives $\tfrac{\delta}{\delta \ph^\al(x)}$ as ``basis'' on the fiber $\E_c'[-1]\oplus\frakg'_c[-2]$ and we denote them by $\ph_\al^\dgr$.
In the physics literature they are called \textit{antifields}.
In the above formula $f_F$ is a distribution with the wavefront set orthogonal to the total diagonal, and with the support that is compact and is contained in the product of partial diagonals. Later we will extend our discussion to more singular objects. In the appropriate topology (more details in \cite{FR}) elements of the form \eqref{Polynom} are dense in  $\BV$, so we can often restrict our discussion to such elements, without the loss of generality. Functional derivative with respect to an odd variable or an antifield can be defined on elements \eqref{Polynom} as the left derivative and extended to $\BV$ by continuity. We will always assume that $\tfrac{\delta}{\delta \ph^\al(x)}$, $\tfrac{\delta}{\delta \ph_\al^\dgr(x)}$ are the left derivatives, unless stated otherwise.

The algebra $\BV$ has two gradings: ghost number $\#\gh$ and antifield number $\#\af$. Functionals of physical fields have both numbers equal to 0. Functionals of ghosts have a $\#\gh=\#\pg$ and  $\#\af=0$. All the vector fields have a non-zero antifield number and $\#\gh=-\#\af$.  The space $\BV$ seen as the space of graded multivector fields is equipped with a graded generalization of the Schouten bracket $\{.,.\}$, called in this context \textit{the antibracket}. The space of on-shell functionals is characterized by means of the Koszul operator. It can be written as the antibracket with the physical action $S_{\mathrm{ph}}$,
\be\label{Koszul}
\delta_{\mathrm{ph}} F=\{F,L_{\mathrm{ph}}(f)\},\ F\in \BV,\,f\equiv 1\ \textrm{on }\supp\, F\,.
\ee
To simplify the notation we often write  $\delta_{\mathrm{ph}} F=\{F,S_{\mathrm{ph}}\}$, instead of \eqref{Koszul}. In analogy to \eqref{Koszul} one finds a generalized Lagrangian $\theta_{\mathrm{ce}}$, which implements the Chevalley-Eilenberg differential,  $\gamma_{\mathrm{ce}} F=\{.,\theta_{\mathrm{ce}}\}$. The total BV differential is the sum of the Koszul-Tate differential and the Chevalley-Eilenberg differential:
 \[
 s_{BV}F\doteq\{F,S+\theta_{\mathrm{ce}}\}\,,
 \]
which satisfies $s_{BV}^2=0$  and the 0-th cohomology of $(\BV,s_{BV})$ is the space of gauge invariant on-shell multilocal functionals: $\BV^{\,ph}=H^0(\BV, s_{BV})$. 

In the next step we introduce the gauge fixing. Often one needs to extend the BV complex by adding auxiliary fields, for example antighosts $\bar{C}$ and Nakanishi-Lautrup fields $B$. One obtains a new extended configuration space and the corresponding extended space of multilocal functionals, denoted by $\BV$. The full multiplet is denoted by  $\ph\in\overline{\E}$ and it is a section of some graded vector bundle $\overline{E}$ over $M$. We assume that on $\overline{\E}$ there exists a duality (in the sense of vector spaces, not graded vector spaces) $\left<.,.\right>_{\overline{\E}}$, which allows to embed $\overline{\E}$ in $\overline{\E}_c'$ (in the example of the electromagnetic field this is just the Hodge duality).

To fix the gauge, we perform first an automorphism $\alpha_\Psi$ of $\BV$ which leaves the antibracket invariant (see \cite{FR} for details). Performing this automorphism formally means replacing antifields $\ph_\al^\ddagger(x)$ by  $\ph_\al^\ddagger(x)+\tfrac{\delta \Psi}{\delta\ph^\al(x)}$, where $\Psi\in\BV(M)$ is a functional with $\#\gh=-1$, called the gauge fixing fermion. The action $S+\theta_{\mathrm{ce}}$ is transformed into a new action $S_\ex=\al_\Psi(S+\theta_{\mathrm{ce}})$. One introduces also a new grading, which is sometimes called the \textit{total antifield number} $\#\ta$. It is equal to 1 for vector fields on $\overline{\E}$, irrespective of their antifield number and is equal to 0 for functions on $\overline{\E}$. Note that a functional can have a non-zero $\#\ta$, but have $\#\af=0$. This is the case in the non-minimal sector in QED, Yang-Mills or general relativity; the antifield $\bar{C}^\ddagger$ of the antighost $\bar{C}$ has $\#\af=0$, but, under the identification $\bar{C}^\ddagger\equiv\frac{\delta}{\delta \bar{C}(x)}$, $\bar{C}^\ddagger$ is a derivation, so  $\#\ta(\bar{C}^\ddagger)=1$. This subtlety plays a role in the discussion of the BRST charge presented in section \ref{free:charge}. The gauge-fixing fermion has to be chosen in such a way that \textit{the gauge fixed action} $S$ (the $\#\ta=0$ part of $S_\ex$) has a well posed Cauchy problem (see \cite{FR} for details). 
The transformed BV differential is given by 
\[
s=\al_\Psi\circ s_{BV}\circ\al_\Psi^{-1}=\{.,S_\ex\}
\]
and we can expand it with respect to the total antifield number $\#\ta$,
\be\label{ta:expansion}
s=\gamma+\delta\,,
\ee
where the differential $\delta$ is the Koszul operator for the field equations derived from $S$ and
 $\gamma$ is generated by $\theta=S_\ex-S$. In this context $\gamma$ is usually called \textit{the gauge-fixed BRST operator}. The uniqueness of the Cauchy problem solution for the \eom's derived from $S$ implies that $(\BV,\delta)$ is a resolution.
 \section{Quantization}
\subsection{Free theory}\label{free:theory}
From the point of view of quantization it is convenient to split the gauge fixed action $S$ into a quadratic part and the remainder, called \textit{the interaction term}. We perform the Taylor expansion
\be\label{Taylor1}
L(f)(\ph_0+\ph)=L(f)(\ph_0)+\left<L(f)^{(1)}(\ph_0),\ph\right>+\frac{1}{2}
\left<L(f)^{(2)}(\ph_0);\ph,\ph\right>+\dots\,,
\ee
where $\left<L(f)^{(1)}(\ph_0),\ph\right>\doteq\sum_\al\left<\frac{\delta}{\delta \ph^\al}(L(f))(\ph_0),\ph^\al\right>$ and
$\al$ runs through all the indices of the field configuration multiplet $\ph$. The first term of the above expansion is just a constant, the second one vanishes if we choose the background configuration $\ph_0$ to be a solution of \eom's. We denote this term by $L_{\textrm{lin}}$. The third term, denoted by $L_0$, is the quadratic part  of the gauge fixed Lagrangian. Let $S_0$ denote the action corresponding to $L_0$. 
We obtain an expansion:
\[
S=S(\ph_0)+S_{\textrm{lin}}+S_0+\dots\,.
\]
The Euler-Lagrange derivative of $S_0$ induces the Euler Lagrange operator operator $P:\overline{\E}\rightarrow \overline{\E}_c'$. Moreover, we assume that the image of $P$ is contained in $\overline{\E}$ (elements of $\overline{\E}$ are identified with distributions by means of $\left<.,.\right>_{\overline{\E}}$) and that
the gauge fixing is done in such a way that $P$ is normally hyperbolic (for gauge theories and gravity this was shown in \cite{FR}, the bosonic string was studied in \cite{BRZ}). This implies that $P$ has unique retarded and advanced propagators $\Delta^{A/R}$, i.e. the relations
\begin{align*}
P\circ\Delta^{A/R}&=\id_{\overline{\E}_c}\,,\\
\Delta^{A/R}\circ P\big|_{\overline{\E}_c}&=\id_{\overline{\E}_c}
\end{align*}
hold  and $\Delta^{A/R}$ fulfill the support properties
\begin{align*}
\supp(\Delta^R)&\subset\{(x,y)\in M^2| y\in J^-(x)\}\,,\\
\supp(\Delta^A)&\subset\{(x,y)\in M^2| y\in J^+(x)\}\,.
\end{align*}
The causal propagator is defined as $\Delta=\Delta^R-\Delta^A$. Let us denote $S_1=S^\ex-S_0$. In the first step we will quantize the free theory, i.e. the one defined by the free action $S_0$. $S_1$ is the full interaction term, with antifields included. We expand now the gauge-fixed BRST differential $\gamma$. Since $\theta(f)$ depends also on antifields, we have to take them into account as well. Actions we consider are polynomial in antifields, so the left derivative with respect to $\ph^\ddagger$ makes sense and we can expand
\begin{multline*}
\theta(f)(\ph_0+\ph,\ph^\dgr)=\left<\frac{\delta \theta(f)}{\delta \ph^\dgr}(\ph_0,0),\ph^\dgr\right>+\frac{1}{2}\left<\frac{\delta^2 \theta(f)}{\delta \ph\delta\ph^\dgr}(\ph_0,0);\ph,\ph^\dgr\right>+\\
+\frac{1}{6}\left<\frac{\delta^3 \theta(f)}{\delta \ph^2\delta\ph^\dgr}(\ph_0,0);\ph,\ph,\ph^\dgr\right>+...\,.
\end{multline*}
Without any restrictions on the physical component of the background configuration $\ph_0$, we can choose $\ph_0$ in such a way that the first term in the above expansion vanishes and we obtain
\[
\theta=\theta_0+\theta_1+\dots\,.
\]
The first nontrivial term, denoted by  $\theta_0$, generates the free BRST differential.
Its derivative $\frac{{\delta^l}^2 \theta_0(f)}{\delta\ph^\sigma\delta\ph_\al^\dgr}$ is an element of $\overline{\E}\otimes \overline{\E}'$
and it induces a differential operator $K:\overline{\E}\rightarrow \overline{\E}$. In local coordinates we can write the second derivative of $\theta_0(f)$ as $f(y) K^{\al}_{\phantom{\al}\sigma}(x')\delta(y-x')$ and
\be\label{gamma0}
\gamma_0=\sum_{\sigma,\al}{K}^{\al}_{\ \sigma}\Phi_x^\sigma\frac{\delta}{\delta\ph^\al(x)}\,.
\ee
\begin{exa}[Free electromagnetic field]\label{ex1}
{\small As an example consider a free electromagnetic field described by the Lagrangian $L_{EM}(f)(A)=-\frac{1}{2}\int_M (F\wedge *F)f$ where $F=dA$ and $A\in\Gamma(T^*M)\equiv \Omega^1(M)$. The extended configuration space is $\overline{\E}(M)=\Omega^1(M)\otimes \Ci(M)\otimes \Ci(M)[1]\otimes \Ci(M)[-1]$ and an element of this space can be written as a quadruple $\ph=(A,B,C,\bar{C})$. Let us denote by $\left<.,.\right>$ the duality between $\overline{\E}(M)$ and $\overline{\E}'(M)$ and by $\left<.,.\right>_g$ the duality on the space of $p$-forms induced by the metric, i.e. $\left<u,v\right>_g\doteq\int_M u\wedge *v$, $u,v\in \Omega^p(M)$.
The extended Lagrangian takes the form
\begin{multline*}
L_M(f)(\ph)=-\tfrac{1}{2}\left<fF,F\right>_g-\left<fdC,\tfrac{\delta}{\delta A}\right>+i\left<fB,\tfrac{\delta}{\delta \bar{C}}\right>+\\-i\left<fd\bar{C},d C\right>_g-\left<fB,\hinv d*\!A-\tfrac{1}{2}B\right>_g\,.
\end{multline*}}
Operators $P$ and $K$, written in the basis $(A,B,C,\bar{C})$, take the form
\[
P=\left(\begin{array}{cccc}
\delta d&d&0&0\\
\delta&-1&0&0\\
0&0&0&i\delta d\\
0&0&-i\delta d
\end{array}\right)\,,\quad
K=\left(\begin{array}{cccc}
0&0&d&0\\
0&0&0&0\\
0&0&0&0\\
0&i&0&0
\end{array}\right)\,.
\]
\end{exa}
The classical master equation (\cme) yields $2\{\theta,S\}+\{\theta,\theta\}\sim0$, where $\sim$ is defined by \eqref{equ}. Expanding this relation in powers of field configurations, we obtain in particular (in the  second order in $\ph$ and in the 0th order in $\ph^\ddagger$)
\be\label{fullCME}
2\{\theta_0,S_0\}+\{\theta_1,S_{\textrm{lin}}\}\sim0\,.
\ee
The first two terms of this identity correspond to classical master equation for the free action $S_0+\theta_0$. Since $\theta_0$ has to be even with respect to the ghost grading and $\#\gh(\ph^\ddagger_\al)=-\#\gh(\ph^\al)-1$, we obtain $\tr K=0$ and $\{\theta_0,\theta_0\}=0$. Therefore, the classical master equation of the free theory can be simply expressed as:
\be\label{freeCME}
\{\theta_0,L_0\}\sim 0\,.
\ee
and it implies that the free action $S_0$ is invariant under the free BRST operator $\gamma_0$. Relation \eqref{freeCME} is compatible with \eqref{fullCME} only if $\{\theta_1,L_{\textrm{lin}}\}\sim0$, which is true if $\ph_0$ solves the equations of motion (i.e. $\ph_0$ is on-shell). 

We will show in subsection \ref{gauge:inv} that, in order to construct the interacting theory starting from the action $S_0+\theta_0$, one has to impose the free {\cme} and this implies that $\ph_0$ has to be a solution of the equations of motion. However, if we start from the theory with the free action $S_0$, there are no \textit{a priori} reasons to choose $\ph_0$ to be on-shell. It could, nevertheless, happen that, for construction of states, one needs to impose restrictions on $\ph_0$.

The general construction of the interacting theory, starting from $S_0$ as a free action, was performed in  \cite{FR3}. Let us recall briefly the main ideas.
The classical linearized theory is constructed by introducing the Peierls bracket given by (to simplify the sign convention we use both the right and the left derivative):
\begin{equation*}
\Pei{F}{G} = \sum_{\al,\beta} \skal{\frac{\delta^r F}{\delta\ph^\al}}{{\De}^{\al\beta}\frac{\delta^l G}{\delta\ph^\beta}},
\end{equation*}
where $F, G \in\overline{\F}_\mc(M)$ are microcausal elements of $\BV$, i.e. they are smooth, compactly supported and their derivatives (with respect to both $\ph$ and $\ph^\dgr$) satisfy the WF set condition:
 \be\label{mlsc}
\WF(F^{(n)}(\ph,\ph^\dgr))\subset \Xi_n,\quad\forall n\in\NN,\ \forall\ph\in\overline{\E}(M)\,,
\ee
where $\Xi_n$ is an open cone defined as 
\be\label{cone}
\Xi_n\doteq T^*M^n\setminus\{(x_1,\dots,x_n;k_1,\dots,k_n)| (k_1,\dots,k_n)\in (\overline{V}_+^n \cup \overline{V}_-^n)_{(x_1,\dots,x_n)}\}\,,
\ee
The space of compactly supported vector-valued distributions on $M^n$ with the WF set contained in $\Xi_n$ will be denoted by $\Ecal'_{\Xi_n}(M^n,V)$, where $V$ is some finite dimensional vector space. The space of microcausal elements of the BV complex $\BV_{\mc}$ is equipped with a topology $\tau_\Xi$ induced by the H\"ormander topology, as defined in \cite{FR}: 
The quantized algebra of the free fields is constructed by means of the deformation quantization of the classical algebra $(\BV_{\mc},\Pei{.}{.})$. To this end, we equip the space of formal power series $\BV_{\mc}[[\hbar]]$ with a noncommutative star product which corresponds to the operator product of quantum observables. For this construction one needs Hadamard parametrices. A Hadamard parametrix $\omega$ is here understood as a matrix with rows and columns numbered by the indices $\al$ of the field configuration multiplet $\ph$ and with entries in $\Dcal'(M^2)$ which fulfill
\begin{IEEEeqnarray}{rCl}\label{parametrix}
  \omega^{\alpha \beta}(x,y) - (-1)^{|\ph^\alpha| |\ph^\beta|} \omega^{\beta \alpha}(y,x)& =& i \Pei{\ph^\alpha(x)}{\ph^\beta(y)},\IEEEyessubnumber\label{classical:limit}\\
 \sum_\bet P_{\al\beta}(x) \omega^{\beta \gamma}(x,y) & =& 0\ \textrm{mod }\Ci\textrm{ function},\IEEEyessubnumber\label{field:eq}\\
 \WF(\omega^{\alpha \beta}) & \subset &C_+,\IEEEyessubnumber\label{WFset}\\
 \overline{\omega^{\alpha \beta}(x,y)} & =& \omega^{\beta \alpha}(y,x).\IEEEyessubnumber\label{hermitian}
\end{IEEEeqnarray}
By $ C_+$ we denoted the following subset of the cotangent bundle $ T^*M^2$:
\[
 C_+ = \{ (x_1, x_2; k_1, - k_2) \in T^*M^2 \setminus \{ 0 \} | (x_1; k_1) \sim (x_2; k_2), k_1 \in \bar V^+_{x_1} \},
\]
where $(x_1; k_1) \sim (x_2; k_2)$ if there is a lightlike geodesic from $x_1$ to $x_2$ to which $k_1$ and $k_2$ are coparallel.
If we replace the condition (\ref{field:eq}) by a stronger one 
 \begin{equation}\label{field:eq:s}
\sum_\bet P_{\al\beta}(x) \omega^{\beta \gamma}(x,y) =0\,,
 \end{equation}
 then the Hadamard parametrix becomes a Hadamard 2-point function. Assume that on the background manifold $M$, there exists a quasifree Hadamard state and write the corresponding 2-point function in the form $\omega=\frac{i}{2}\Delta+H$. The Feynman-like propagator is defined as $H_F=i\Delta_D+H$, where $\Delta_D\doteq \tfrac{1}{2}(\Delta_R+\Delta_A)$ is the Dirac propagator.  Let $\al_H\doteq e^{\frac{\hbar}{2}\Gamma_H}$ be a map defined on regular functionals $\BV_\reg(M)$ (i.e. functionals satisfying  the wavefront set condition $\WF(F^{(n)}(\ph,\ph^\dgr))=\varnothing$ for all $\ph$), where 
\[
\Ga_{H} \doteq \sum_{\al,\beta}\left<{H}^{\al\beta}, \frac{\delta^l}{\delta\ph^\al}\frac{\delta^r}{\ph^\beta}\right>\,.
\]
$\BV_\reg$ can be completed to a larger space, the space $\BV_\mc(M)$, of functionals that satisfy the condition
\be\label{mlsc}
\WF(F^{(n)}(\ph,\ph^\ddagger))\subset \Xi_n,\quad\forall n\in\NN,\ \forall\ph\in\overline{\E}(M)\,.
\ee
Functionals fulfilling this criterium are called \textit{microcausal}. On $\BV_\mc(M)$ we can define the star product:
\begin{equation*}
F\star_H G\doteq m\circ \exp({i\hbar \Gamma'_\omega})(F\otimes G),
\end{equation*}
where  $\Gamma'_\omega$ is the functional differential operator
\begin{equation*}
\Gamma'_\omega\doteq  \sum_{\al, \beta} \left<{\omega}^{\al\beta},\frac{\delta^l}{\delta\ph^\al} \otimes \frac{\delta^r}{\delta\ph^\beta}\right>\,.
\end{equation*}
The resulting algebra is denoted by $\fA_H(M)$.
As there is no preferred two-point function $\omega$, and hence no preferred $H$, we have to consider all of them simultaneously. The quantum algebra (which contains in particular the Wick polynomials) is an extension of the source space $\BV_\reg(M)$ with respect to the initial topology induced by the map $\al_H:\BV_\reg\rightarrow \BV_\mc[[\hbar]]$ (see \cite{BDF} for details). It is defined by extending  $\BV_\reg(M)$ with all elements of the form $\lim_{n\rightarrow \infty}\al_H^{-1}(F_n)$, where $(F_n)$ is a convergent sequence in $\BV_\mc(M)$ with respect to $\tau_\Xi$. The resulting space, denoted by $\al_H^{-1}(\BV_\mc)$, is equipped with a unique continuous star product $\star$ equivalent to $\star_H$:
\[
\al_H^{-1}F\star \al_H^{-1}G\doteq \al_H^{-1}(F\star_H G)\,.
\]
Different choices of $H$ differ only by a smooth function, hence all the algebras  $(\al_H^{-1}(\BV_\mc[[\hbar]]),\star)$ are isomorphic and define an abstract algebra $\fA$. 
We can realize $\fA$ more concretely as the space of families $\{ \al_HF \}_H$,  numbered by possible choices of $H$, fulfilling the relation
\[
 F_{H'} = \exp(\hbar \Gamma_{H'-H}) F_H\qquad F \in \fA\,,
\]
equipped with the product
\[
 (F \star G)_H = F_H \star_H G_H.
\]
The support of $F \in \fA(M)$ is defined as $\supp(F) = \supp(\al_HF)$. Again, this is independent of $H$. Functional derivatives are defined by
\be\label{derivH}
\skal{\frac{\delta F}{\delta \ph}}{\psi} = \al_H^{-1}\skal{\frac{\delta \al_HF}{\delta \ph}}{\psi}\,,
\ee
which is well defined as $\Gamma_{H'-H}$ commutes with functional derivatives. For a fixed background $M$, the free net of local algebras is defined by assigning to each relatively compact, causally convex region $\Ocal\subset(M)$, a unital $*$-algebra $\fA(\Ocal)$. 
\subsection{Interacting theory}\label{int:theor}
Following \cite{FR3}, we introduce the interaction by means of renormalized time-ordered products. 
 Let us define operators $\Tcal_{n}:\BV_\loc^{\otimes n}\rightarrow \fA$ by means of
\[
\Tcal_{n}(F_1,\ldots,F_n)=\al_{H+w}^{\minus} (F_1)\T\ldots\T \al_{H+w}^{\minus} (F_n)\,,
\]
for $F_i\in \BV_\loc$ with disjoint supports,  where $F\T G\doteq \alpha_{i\De_D}(\alpha_{i\De_D}^{\minus}F\cdot\alpha_{i\De_D}^{\minus}G)$, $\TT_{0}=0$ and $\TT_1=\al^{-1}_{H+w}$. In the last formula, $w$ is the smooth part in the Hadamard function. 
Renormalization freedom related to the choice of $w$ is discussed in \cite{HW}. Maps $\Tcal_{n}$ have to be extended to functionals with coinciding supports and are required to satisfy the standard conditions given in \cite{BDF,H}. In particular, we require the graded symmetry, unitarity, scaling properties, the support property $\supp\TT_n(F_1,\dots,F_n)\subset\bigcup\supp F_i$ and the causal factorization property, which states that
\be\label{CausFact}
\TT_{n}(F_1\otimes \dots \otimes F_n)=
\TT_{i}(F_1\otimes \dots \otimes F_i) \star
\TT_{n-i}(F_{i+1} \otimes \dots \otimes F_n) \, ,
\ee
if the supports of $F_1\ldots F_i$ are later than the supports of $F_{i+1},\ldots F_n$.

Maps $\TT_n$ are constructed inductively, and each $\TT_n$ is uniquely fixed by the lower order maps $\TT_k$, $k<n$, up to addition of an $n$-linear  map
\be
Z_n:\BV_\loc^n\to\al_{H+w}^{-1}(\BV_\loc)\cong\fA_\loc\,,
\ee
which describes the possible finite renormalizations.
In \cite{FR3} it was shown that renormalized time ordered product can be extended to an associative, commutative binary product defined on the domain $\Dcal_{\TT}\doteq\TT(\BV)$, where $\TT\doteq\oplus_n\TT_n\circ m^{-1}$.
Here $m^{-1}:\BV\to S^\bullet\BV^{(0)}_\loc$ is the inverse of the multiplication, as defined in \cite{FR3,Rej11b}. $\Dcal_{\TT}$ contains in particular $\fA_\loc$ and is invariant under the renormalization group action. Renormalized time ordered products are defined by
\be
A\T B\doteq\TT(\TT^{\minus}A\cdot\TT^{\minus}B)\,.
\ee

Using time-ordered products we can introduce the interaction.
As indicated in section \ref{free:theory}, we split the Lagrangian into  $L_\ex=L_0+L_1$. The quantum field constructed form the interaction term $L_1$ is $\TT L_1$; the formal S-matrix is given by
\be\label{Smatrix}
\Scal(L_{1}(f))\doteq e_{\sst{\TT}}^{i\TT L_{1}(f)/\hbar}=\TT(e^{iL_{1}(f)/\hbar})\,,
\ee
which is a Laurent series in $\hbar$. Now we want to construct the interacting net of local algebras. Let $\Ocal\subset M$ be an open and relatively compact subset. 
The local algebra of observables associated to $\Ocal$ has to be independent of an interaction switched on outside
 of $\Ocal$.  We define
\[
\mathscr{V}_{S_1}(\Ocal) \doteq \{ V\in\fA_{\loc}\ |\ \supp(V-\TT{L_1}(f))\cap\overline{\Ocal}=\varnothing,
\text{ if } [L_1] =S_1\text{ and } f\equiv 1 \text{ on }\Ocal \}\,.
\]
The relative $S$-matrix in the algebraic adiabatic limit is given by
\[
\Scal^\Ocal_{S_1}(F)=(\Scal_V(F))_{V\in\mathscr{V}_{S_1}(\Ocal)}
\] 
for $F\in\fA$ with $\supp\, F\subset \Ocal$, where
\be\label{Bog}
\Scal_V(F)\doteq\Scal(V)^{\star-1}\star \Scal(V+F)\,.
\ee
The relative $S$-matrix defined this way is a covariantly constant section  in the sense that for any $V_1,V_2\in\mathscr{V}_{S_1}(\Ocal)$ there exists an automorphism
$\beta$ of $\fA$ such that
\[
\beta(\Scal_{V_1}(F))=\Scal_{V_2}(F)\quad\ \forall F\in\fA_{\loc}\ ,\ \supp\, F\subset \Ocal\,,
\] 
Interacting quantum fields in $\Ocal$ are generated by $\Scal_{V}^\Ocal(F)$ and, for given $V\in\mathscr{V}_{S_1}(\Ocal)$, we can write a corresponding component as a formal power series:
\be\label{Rv}
(R^\Ocal_{S_1}(F))_V\doteq R_V(F)=\frac{d}{d\lambda}\Big|_{\lambda=0}\Scal_{V}(\lambda F)\,,
\ee
Differentiation of $\Scal_{V}(\lambda F)$ yields
\be\label{RV}
R_V(F)=\left(e_{\sst{\TT}}^{iV/\hbar}\right)^{\star\minus}\star\left(e_{\sst{\TT}}^{iV/\hbar}\T F\right)\,,
\ee
which is a formal power series in $\hbar$ and $V$. An interacting net of local algebras on a fixed spacetime $M$ is obtained by assigning to each relatively compact, causally convex region $\Ocal\subset M$, an algebra $(R_V(\fA(\Ocal)),\star)$, $V\in\mathscr{V}_{S_1}(\Ocal)$. This definition doesn't depend on the choice of $V$, since local algebras constructed with different $V$'s belonging to $\mathscr{V}_{S_1}(\Ocal)$ are isomorphic.

An interacting field can be written in terms of \textit{retarded products} defined as coefficients in the following expansion:
\[
R_V(F)=\sum\limits_{n=0}^\infty\frac{i^n}{\hbar^nn!}\mathcal{R}_n(V^{\otimes n};F)\,.
\]
Time-ordered products can be normalized in such a way that retarded products satisfy a useful relation, called the GLZ identity,
\be\label{glz}
[R_V(F),R_V(G)]_\star=i\hbar\frac{d}{d\la}\left(R_{V+\la G}(F)-R_{V+\la F}(G)\right)\Big|_{\la=0}\,.
\ee
On the right hand side of \eqref{glz} we have formal derivatives of retarded products with respect to $V$, so it is convenient to use the notation
\[
R_V^{(1)}[G](F)\doteq\frac{d}{d\la}\left(R_{V+\la G}(F)\right)\Big|_{\la=0}\,.
\]
Retarded products satisfy an important support property, which can be conveniently written as
\be\label{supp:prop}
\supp(\mathcal{R}_n(V_1(x_1),\dots,V_1(x_n);F(x))\subset \{(x_1,\dots,x_n;x)|x_i\in x+\bar{V}_-, \forall i=1\dots,n\}\,,
\ee
where $F(x)$, $V_i(x_i)$  are local forms  (non-integrated, i.e. density-valued local functionals). Using retarded products we define a new non-commutative (partial) product $\star_V$. This product was first proposed by K.~Fredenhagen in \cite{F11} as an interacting star product. It is given by
\be\label{starV}
F\star_V G\doteq R_V^{-1}(R_V(F)\star R_V(G))\,,
\ee
for $F,G\in\fA$ such that this expression is well defined.

Classical structures appearing in the BV formalism also have to be quantized. The renormalized time-ordered antibracket is defined by
\[
\{X,Y\}_{\sst{\TT}}=\TT\{\TT^{-1}X,\TT^{-1}Y\}\ .
\]
We can also write it in the form
\be\label{antibracketTR}
\{X,Y\}_{\TT}=\sum_\al\int\!\left(\!\frac{\delta^r X}{\delta\ph^\al}\T\frac{\delta^l Y}{\delta\ph_\al^\dgr}-(-1)^{|\ph_\al^\dgr|}\frac{\delta^r X}{\delta\ph_\al^\dgr}\T\frac{\delta^l Y}{\delta\ph^\al}\!\right)d\mu\,,
\ee
where we denoted
\be\label{antibracketTR2}
\frac{\delta^r X}{\delta\ph^\al}\T\frac{\delta^l Y}{\delta\ph_\al^\dgr}\doteq \TT\Big(D^*\Big(\TT ^{-1}\frac{\delta X}{\delta\ph}\otimes \TT ^{-1}\frac{\delta Y}{\delta\ph^\ddagger}\Big)\Big)\,,
\ee
where $D^*$ is the pullback by the diagonal map and $\big(\TT ^{-1}\frac{\delta X}{\delta\ph}\big)(\ph)$ is a compactly supported distribution (i.e. an element of $\E'$) defined by
\[
\left<\big(\TT ^{-1}\frac{\delta X}{\delta\ph}\big)(\ph),f\right>\doteq\Big(\TT ^{-1}\Big<\frac{\delta X}{\delta\ph},f\Big>\Big)(\ph)=\Big<\frac{\delta}{\delta\ph}\TT ^{-1}X,f\Big>(\ph)\,,\qquad f\in\E\,.
\]
In the second step we used the field independence of time ordered products. Since $X\in\TT(\BV)$, the distribution $\big(\TT ^{-1}\frac{\delta F}{\delta\ph}\big)(\ph)$ defined by the above equation is actually an element of $\D$ and the pullback in \eqref{antibracketTR2} is well defined.
Similarly we define the antibracket with the $\star$-product:
\be\label{antibracketstar}
\{X,Y\}_{\star}=\sum_\al\int\!\left(\!\frac{\delta^r X}{\delta\ph^\al}\star\frac{\delta^l Y}{\delta\ph_\al^\dgr}-(-1)^{|\ph_\al^\dgr|}\frac{\delta^r X}{\delta\ph_\al^\dgr}\star\frac{\delta^l Y}{\delta\ph^\al}\!\right)d\mu\,,
\ee
whenever it exists. Clearly it is well defined if one of the arguments is regular or equal to $S_0$. Moreover, the antibracket $\{.,S_0\}_\star$ with the free action defines a $\star$-derivation. Similarly, $\{.,S_0\}_{\sst{\TT}}$ is a $\T$-derivation. A relation between, $\{.,S_0\}_{\sst{\TT}}$ and $\{.,S_0\}_\star$  is provided by the Master Ward Identity \cite{BreDue,H}:
\begin{align}\label{MWI}
\{e_{\sst{\TT}}^{iV/\hbar},S_0\}_\star&=\{ e_{\sst{\TT}}^{iV/\hbar},S_0\}_{\sst{\TT}}+e_{\sst{\TT}}^{iV/\hbar}\T(\Lap_V+\frac{i}{2\hbar}\{V,V\}_{\sst{\TT}})=\\
&=\frac{i}{\hbar}e_{\sst{\TT}}^{iV/\hbar}\T\big(\{V,S_0\}_{\sst{\TT}}+\frac{1}{2}\{V,V\}_{\sst{\TT}}-i\hbar\Lap(V)\big)\,,\nonumber
\end{align}
where $V\in\fA_\loc$ and $\Lap(V)$ is a local functional. One can expand it in powers of $V$ to obtain
\[
\Lap(V)=\sum_{n=0}^{\infty}\Lap^n(V^{\otimes n};V)\,,
\]
where coefficients $\Lap^n$ are linear functions $\fA_{\loc}^{\otimes n}\rightarrow \fA_{\loc}$ defined recursively by
\begin{align}
\Lap^n(V_1\otimes\ldots\otimes V_n; V)&=-\left(\tfrac{i}{\hbar}\right)^{n+1}V_1\T\ldots\T V_n\T\{V,S_0\}+\nonumber\\
&-\left(\tfrac{i}{\hbar}\right)^{n}\sum_{i=1}^nV_1\ldots\T \hat{V}_i\T\ldots V_n\T\int\frac{\delta V}{\delta\ph_\al^\ddagger(x)}\T\frac{\delta V_i}{\delta\ph^\al(x)}d\mu(x)+\nonumber\\
&-\sum_{I\subset \{1,\dots,n\}, I\neq \varnothing}\!\!\!\!\!\!\left(\tfrac{i}{\hbar}\right)^{|I|}\Tcirc_{i\in I}V_i\T\Lap^{|I^c|}\left(V^{\otimes(n-k)};V\right)+\nonumber\\
&+\left(\tfrac{i}{\hbar}\right)^{n+1}\int\left(V_1\T\dots\T V_n\T\frac{\delta V}{\delta\ph_\al^\ddagger(x)}\right)\star \frac{\delta S_0}{\delta\ph^\al(x)}d\mu(x)\,,\label{Lap:coeff}
\end{align}
where $\Tcirc_{i\in I}V_i$ denotes the $\T$-product of elements $V_i$ indexed by $i\in I$, and $I^c$ is the complement of $I$ in $\{1,\dots,n\}$. Let us now fix the interaction term $V$. Using the above relation we define the renormalized BV Laplacian on $\fA_\loc$ as: 
\[
\Lap_V(X)\doteq \frac{d}{d\lambda}\Big|_{\lambda=0}\Lap({V+\lambda X})\,.
\]
Using this definition we obtain a relation:
\begin{align}\label{MWI2}
\Lap_V(X)=& \frac{d}{d\lambda}\Big|_{\lambda=0}(e_{\sst{\TT}}^{-i(V+\la X)/\hbar}\T(\{e_{\sst{\TT}}^{i(V+\la X)/\hbar},S_0\}_\star-\{ e_{\sst{\TT}}^{i(V+\la X)/\hbar},S_0\}_{\sst{\TT}})+\\&-\frac{i}{2\hbar}\{V+\la X,V+\la X\}_{\sst{\TT}})=\nonumber\\
=&\frac{i}{\hbar}(e_{\sst{\TT}}^{-iV/\hbar}\T\{e_{\sst{\TT}}^{iV/\hbar}\T X,S_0\}_\star- \{X,S_0+V\}+X\T e_{\sst{\TT}}^{-iV/\hbar}\T\{e_{\sst{\TT}}^{iV/\hbar},S_0\}_\star)\,.\nonumber
\end{align}
Expanding the renormalized Laplacian in powers of $V$ we obtain
\be\label{LapX:coeff}
\Lap_V(X)=\sum_{n=1}^\infty n\Lap^n\left(V^{\otimes (n-1)}\otimes X;V\right)+\sum_{n=0}^\infty\Lap^n(V^{\otimes n};X)\,.
\ee
If $X$ doesn't contain antifields, only the first sum is present. Compare $\Lap_V(X)$ with the nonrenormalized graded Laplacian $\Lap_{\textrm{nren}}$ of the BV formalism, which has all the $n>0$ terms vanishing and $\Lap_{\textrm{nren}}^0$ is given by the known formula:
\be\label{Lap:regular}
\Lap_{\textrm{nren}} X=\Lap_{\textrm{nren}}^0(X)=\sum\limits_\alpha(-1)^{|\ph_\al|(1+|X|)}\int dx \frac{\delta^2 X}{\delta\ph_\al^\ddagger(x)\delta\ph^\al(x)}\,.
\ee
Unfortunately $\Lap_{\textrm{nren}}$ is not well defined on local functionals. In the renormalized theory $\Lap_{\textrm{nren}}$ is replaced by $\Lap_V$, which is well defined on $\fA_\loc$, but contains non-vanishing higher order terms $\Lap^n$, $n>0$.

The renormalized quantum master equation QME is the condition that
\be\label{QME}
e_{\sst{\TT}}^{-i\TT S_1/\hbar}\T\left(\{e_{\sst{\TT}}^{i \TT S_1/\hbar},S_0\}_{\star}\right)\sim 0\,,
\ee
on the level of natural transformations. Using \eqref{MWI}, condition \eqref{QME} can be expressed as:
\[
\frac{1}{2}\{S_0+S_1,S_0+S_1\}-i\hbar\Lap({S_1})\sim 0\,,
\]
where $\Lap({S_1})$ is also seen as a natural transformation. Assume that the {\qme} holds for $S_1$ and let us fix 
 $V\in\mathscr{V}_{S_1}(\Ocal)$\footnote{The fulfillment of the {\qme} in the algebraic adiabatic limit is guaranteed by certain cohomological conditions, see \cite{FR3} and references therein. The problem is reduced to the analysis of the Lie algebra cohomology of the gauge algebra (the Lie algebra of the local symmetries Lie group). It is well known that such cohomological conditions are fulfilled, in particular, for  QED and Yang-Mills theories \cite{HennBar}, gravity \cite{BTM} and the bosonic string with the Nambu-Goto action\cite{BRZ}.} 
The quantum BV operator is defined by
\be\label{QBV}
\hat{s}(X)=e_{\sst{\TT}}^{-iV/\hbar}\T\left(\{e_{\sst{\TT}}^{iV/\hbar}\T  X,{S_0}(f)\}_{\star}\right)\,,
\ee
where $\supp\, X\subset\Ocal$ and $f\equiv 1$ on $\Ocal$ and it is independent of the choice of $V\in\mathscr{V}_{S_1}(\Ocal)$.
If (\ref{QME}) holds, then  it follows from \eqref{MWI2} that $\hat{s}$ can be expressed as
\be\label{class:quant}
\hat{s}(X)=\{X,S_0+V\}_{\sst\TT}-i\hbar\Lap_V(X)=sX-i\hbar\Lap_V(X)\,,
\ee
and it has the following property:
\be\label{intertwining:s:r}
\{.,S_0\}_\star\circ R_{S_1}^\Ocal=R_{S_1}^\Ocal\circ\hat{s}\,.
\ee
It was proven in \cite{FR3} that constructing a solution to (\ref{QME}) amounts to analyzing the cohomology $H^1(\gamma|d)$ on the space of local forms. The anomaly term $\Lap(V)$ is expressed in terms of the renormalized BV Laplacian with the use of fundamental theorem of calculus\footnote{The first version of \cite{FR3} contains a notational inconsistency which suggests that $\Lap(V)=\Lap_V(V)$. This was corrected in the erratum to that paper.}:
\[
\Lap(V)=\int\limits_0^1\Lap_{\lambda V}(V)d\lambda\,.
\]
The natural question to ask is: how to extend the operator $\Lap_V(.)$ to 
multilocal functionals? The nonrenormalized counterpart satisfies:
\be\label{Delta:Tbracket}
\Lap(X\T Y)=\Lap(X)\T Y+(-1)^{|X|}X\T \Lap(Y)+\{X,Y\}_{\TT}\,,
\ee
It would be tempting to require the same property to hold for the renormalized operator $\Lap_V(.)$, but then one would have to give up other properties. Note that since
\[
\{e_{\sst{\TT}}^{iV/\hbar}\T X\T Y,S_0\}_\star=-\hbar^2\frac{\partial^2}{\partial\lambda \partial\mu}\Big|_{\lambda=\mu=0}\{e_{\sst{\TT}}^{i(V+\lambda X+\mu Y)/\hbar},S_0\}_\star\,,
\]
one finds (assuming the QME and using \eqref{MWI})
\begin{multline}
\{e_{\sst{\TT}}^{iV/\hbar}\T X\T Y,S_0\}_\star=-\hbar^2\frac{\partial^2}{\partial\lambda \partial\mu}\Big|_{\lambda=\mu=0}\left(\{ e_{\sst{\TT}}^{i(V+\lambda X+\mu Y)/\hbar},S_0\}_{\sst{\TT}}+\right.\\
\left.e_{\sst{\TT}}^{i(V+\lambda X+\mu Y)/\hbar}\TR(\Lap({V+\lambda X+\mu Y})+\frac{i}{2\hbar}\{V+\lambda X+\mu Y,V+\lambda X+\mu Y\}_{\sst{\TT}})\right)=\\
e_{\sst{\TT}}^{iV}\T\Big(\{X\T Y,S_0+V\}-i\hbar\big(\Lap_V(X)\T Y+(-1)^{|X|}X\T \Lap_V(Y)+\{X,Y\}_{\TT}+\\
-i\hbar\frac{\partial^2}{\partial\lambda \partial\mu}\Big|_{\lambda=\mu=0}\Lap({V+\lambda X+\mu Y})\big)\Big)
\,.
\end{multline}
It follows that
\begin{multline*}
\hat{s}(X\T Y)=s(X\T Y)-i\hbar\Big(\Lap_V(X)\T Y+(-1)^{|X|}X\T \Lap_V(Y)+\{X,Y\}_{\TT}+\\-i\hbar\frac{\partial^2}{\partial\lambda \partial\mu}\Big|_{\lambda=\mu=0}\Lap({V+\lambda X+\mu Y})\Big)\,,
\end{multline*}
so in order to reproduce the relation \eqref{class:quant} also for products, it is natural to set
\begin{multline*}
\Lap_V(X\T Y)\doteq \Lap_V(X)\T Y+(-1)^{|X|}X\T \Lap_V(Y)+\{X,Y\}_{\TT}+\\-i\hbar\frac{\partial^2}{\partial\lambda \partial\mu}\Big|_{\lambda=\mu=0}\Lap({V+\lambda X+\mu Y})\,.
\end{multline*}
Note that on regular elements $\Lap_V(.)$ is just $\Lap_{\mathrm{nren}}$ and since it doesn't depend on the interaction, the last term in the above definition vanishes for $\Lap_{\mathrm{nren}}$, so our proposal is consistent with the non-renormalized case. Moreover, the ``extra term'' is of higher order in $\hbar$, so $\Lap_V$ behaves like the graded Laplacian, modulo $\Ocal(\hbar)$ corrections.
For higher powers we use an analogous definition (for simplicity of notation we assume all $X_i$ to be even)
\begin{multline}\label{DeltaV:sym}
\Lap_V(X_1\T\dots\T X_n)\doteq\\ \sum\limits_{I\subset\{1,...,n\}\atop I\neq \varnothing, |I|\neq n}\!\!\!(-i\hbar)^{|I_c|-1}\Tcirc_{i\in I} X_{i}\T\frac{\partial^{|I^c|}}{\partial\lambda_{j_1}\dots \partial\la_{j_{|I_c|}}}\Big|_{\vec{\la}_{I}=0}\Lap({V+\vec{\la}_I\cdot\vec{X}})+\\
+\sum_{i,j=1,\dots n\atop i< j}\{X_i,X_j\}_{\sst\TT}\T X_{1}\T\dots\widehat{X_i}\T\dots\widehat{X_j}\T\dots X_n\,,
\end{multline}
where $\vec{\la}_I=(\la_{j_1},...,\la_{j_{|I^c|}})$ and $\vec{\la}_I\cdot\vec{X}\doteq \sum_{j}\lambda_j X_j$.
In \cite{FR3} we showed the existence of a map  from multilocal functionals to the graded symmetric algebra over local functionals $\beta:\BV\rightarrow S^\bullet\BV_\loc$, which is the inverse of the multiplication $m:S^\bullet\BV_\loc\rightarrow\BV$. Using this map and the definition \eqref{DeltaV:sym} one can extend $\Lap_V(.)$ to $\TT(\BV)$ and, for all $X\in\TT(\BV)$,
\[
\hat{s}X=sX-i\hbar\Lap_VX
\]
holds. In the special case of the time-ordered exponential $e_{\sst \TT}^{iX/\hbar}$, $X\in\BV_\loc$, we obtain
\be\label{Lap:exp}
\Lap_V\big(e_{\sst \TT}^{iX/\hbar}\big)=e_{\sst \TT}^{iX/\hbar}\T\Big(\tfrac{i}{\hbar}(\Lap({V+X})-\Lap(V))+\left(\tfrac{i}{\hbar}\right)^2\tfrac{1}{2}\{X,X\}_{\sst \TT}\Big)\,.
\ee
Note that \eqref{intertwining:s:r} implies that $\hat{s}=R_V^{-1}\circ\{.,S_0\}_\star\circ R_V$, so from the nilpotency of 
$\{.,S_0\}_\star$ follows that $\hat{s}$ is also nilpotent. The algebra of gauge invariant quantum fields is defined as the cohomology of the quantum BV operator $\hat{s}$. We want to stress that the quantization scheme proposed by K.~Fredenhagen and myself in \cite{FR3} doesn't require $\gamma_0$ to be a derivation with respect to $\star$. This is due to the fact that $\theta_0$ is included in the interaction term $V$ and the free action $S_0$ is just the quadratic part of the gauge fixed action $S$. Therefore, our approach is suitable for theories, like gravity, where the invariance of the star product with respect to $\gamma_0$ is not easy to establish.
\subsection{Quantization of $S_0+\theta_0$}\label{gauge:inv}
In this section we discuss the possibility to consider  $S_0+\theta_0$, instead of $S_0$, as the starting point for our construction. In order to do it, we have to ensure that the $\star$-antibracket with $\theta_0$ is well defined on multilocal functionals and that $\{.,\theta_0\}_\star$ is a $\star$-derivation. We will show that this is possible only if we require additional conditions, related to the free {\cme}.

We start our discussion with a slight reformulation of condition  \eqref{freeCME}. Let us fix a compact region $\mathcal{K}\subset M$. From \eqref{freeCME} follows that the relation $\left<K\psi,P\psi\right>_{\overline{\E}}=0$
 holds for all $\psi\in\overline{\E}_c$ with $\supp(\psi)\subset \mathcal{K}$. This can be also be written as 
 \be\label{PKKP}
\left<P^*K\psi,\psi\right>_{\overline{\E}}+\left<\psi,K^*P\psi\right>_{\overline{\E}}\,,
 \ee
where $*$ denotes the formal adjoint with respect to $\left<.,.\right>_{\overline{\E}}$. Let us define for any operator $O$ on $\overline{\E}$, the following operation:  $(O^\dagger)^{\al}_{\ \beta}=(-1)^{|\ph^\al||\ph^\beta|}(O^{\ \al}_{\beta})^*$ (the graded formal adjoint). Note that $P^\dagger=P$ and  from \eqref{PKKP} we see that \eqref{freeCME} is satisfied, if the following, stronger, condition is fulfilled:
\be\label{PK}
(-1)^{|\ph^\beta|}P_{\beta\gamma}K^\gamma_{\ \sigma}+(K^\dagger)^{\ \gamma}_{\beta} P_{\gamma\sigma}=0\,.
\ee
For determining the signs we used the fact that $\gamma_0$ is an odd differential, so $K^\gamma_{\ \al}\neq 0$ only for $|\ph^\al|+|\ph^\gamma|=1\mod2$. The same holds for $PK$, because $S_0$ is even. 

Comparing with \eqref{fullCME}, we conclude that \eqref{PK} is compatible with the full {\cme} only if $\ph_0$ is on-shell. We will now study in detail the consequences of condition \eqref{PK}.
\begin{prop}
Identity \eqref{PK} implies that the linearized Koszul-Tate operator $\delta_0$ anticommutes with $\gamma_0$:
\[
 \gamma_0\circ\delta_0+\delta_0\circ\gamma_0=0\,.
\]
\begin{proof}
To see this, note that for a constant derivation $X=\int X^\al(x)\frac{\delta}{\delta\ph^\al(x)}d\mu(x)$ we have
\begin{multline*}
\gamma_0\circ\delta_0X=-\gamma_0\int X^\al(x)\frac{\delta_l L_0(f)}{\delta\ph^\al(x)}d\mu(x)\Big|_{f\equiv 1\atop\textrm{ on }\supp X}=\\
=\int (-1)^{|\ph^\ga|}X^\al(x)({K}^{\gamma}_{\ \sigma}\Phi_z^\sigma) P_{\al\gamma}(x)\delta(x-z)d\mu(x,z)\,.
\end{multline*}
On the other hand
\begin{align*}
\delta_0\circ\gamma_0X&=\int X^\al(x) K^\gamma_{\ \al}(x)^*\frac{\delta L_0(f)}{\delta\ph^\gamma(x)}d\mu(x)\Big|_{f\equiv 1\atop\textrm{ on }\supp X}=\\
&=\int X^\al(x) K^\gamma_{\ \al}(x)^*P_{\gamma\sigma}(x)\Phi^\sigma_xd\mu(x)\,.
\end{align*}
Therefore
\[
(\gamma_0\circ\delta_0+\delta_0\circ\gamma_0)X=\int X^\al(x)((-1)^{|\ph^\al|} P_{\al\gamma}{K}^{\gamma}_{\ \sigma} + {K^\dagger}^{\ \gamma}_{\al}P_{\gamma\sigma})\Phi_x^\sigma d\mu(x)=0
\]
follows.\end{proof}\end{prop}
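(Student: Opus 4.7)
The plan is to exploit that both $\gamma_0$ and $\delta_0$ are graded derivations of $\BV$ of odd total degree, so their graded commutator $D\doteq\gamma_0\circ\delta_0+\delta_0\circ\gamma_0$ is again a graded derivation (of even degree): the mixed terms in expanding $D$ on a product cancel by the opposite Koszul signs. Hence it suffices to verify $D=0$ on a generating set, and on the graded manifold $\Pi T^*\overline{\E}$ the natural generators are the evaluation functionals $\Phi^\sigma_x$ and the antifields $\ph^\ddagger_\alpha(x)=\frac{\delta}{\delta\ph^\alpha(x)}$.

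On the field generators the identity is automatic: $\delta_0\Phi^\sigma_x=0$ by absence of antifield content, and $\gamma_0\Phi^\sigma_x=K^\sigma_{\ \rho}\Phi^\rho_x$ is still antifield-free, so $\delta_0\gamma_0\Phi^\sigma_x=0$ as well. The substance of the claim therefore lies in the action on the antifield generators, and by linearity it suffices to test $D$ against an arbitrary ``constant'' antifield derivation
\[
X=\int X^\alpha(x)\,\frac{\delta}{\delta\ph^\alpha(x)}\,d\mu(x)
\]
with smooth, compactly supported coefficients $X^\alpha$.

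Next I would compute the two compositions on $X$. Writing $\delta_0 X=\{X,L_0(f)\}$ with $f\equiv 1$ on $\supp X$ contracts $X^\alpha$ with $\frac{\delta L_0}{\delta\ph^\alpha}\sim P_{\alpha\gamma}\Phi^\gamma_x$; applying $\gamma_0$ via \eqref{gamma0} then replaces $\Phi^\gamma_x$ by $K^\gamma_{\ \sigma}\Phi^\sigma_x$, picking up a sign $(-1)^{|\ph^\alpha|}$ from the Koszul rule. For the other composition, since $\theta_0$ has the schematic form $\int\ph^\ddagger_\beta K^\beta_{\ \sigma}\ph^\sigma\,d\mu$, the bracket $\gamma_0 X=\{X,\theta_0\}$ produces a new constant antifield derivation whose coefficients arise by contracting $X^\alpha$ with the formal graded adjoint $K^\dagger$ (this is precisely the integration-by-parts step that converts the $K$ hitting $\ph^\sigma$ in $\theta_0$ into $K^\dagger$ hitting $X^\alpha$). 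Applying $\delta_0$ then contracts that derivation with $P_{\gamma\sigma}\Phi^\sigma_x$.

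Adding the two contributions, the integrand collapses to
\[
X^\alpha(x)\bigl((-1)^{|\ph^\alpha|}P_{\alpha\gamma}K^\gamma_{\ \sigma}+(K^\dagger)^{\ \gamma}_{\alpha}P_{\gamma\sigma}\bigr)\Phi^\sigma_x,
\]
which vanishes identically by \eqref{PK}; since $X$ was arbitrary, $D=0$ as claimed. I expect the main obstacle to be purely sign bookkeeping: carefully distinguishing left versus right functional derivatives in the antibracket, tracking the parity shift $|\ph^\ddagger_\alpha|=|\ph^\alpha|+1$, and verifying that the adjoint emerging from the $\{X,\theta_0\}$ integration by parts is exactly the $K^\dagger$ appearing in \eqref{PK} (with the same graded convention $(O^\dagger)^{\alpha}_{\ \beta}=(-1)^{|\ph^\alpha||\ph^\beta|}(O^{\ \alpha}_\beta)^*$). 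Once these graded signs are aligned, no further analytic input is needed—the conclusion is algebraic and pointwise in $x$.
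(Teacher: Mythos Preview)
Your proposal is correct and follows essentially the same route as the paper: both reduce the anticommutator to its action on a constant antifield derivation $X$, compute $\gamma_0\delta_0 X$ and $\delta_0\gamma_0 X$ via the explicit forms of $P$ and $K$, and conclude from \eqref{PK}. Your additional remarks—that $D$ is a derivation so checking on generators suffices, and that the field generators are trivially annihilated—merely make explicit what the paper leaves implicit.
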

\begin{exa}[Free electromagnetic field]
Using results from Example \ref{ex1}, we can verify that the condition \eqref{PK} holds for the free electromagnetic field. Note that
\[
K^\dagger=\left(\begin{array}{cccc}
0&0&0&0\\
0&0&0&-i\\
-\delta&0&0&0\\
0&0&0&0
\end{array}\right)\,,
\]
and the direct computation shows that
\begin{multline*}
(-1)^{|\ph^\beta|}P_{\beta\gamma}K^\gamma_{\ \sigma}+(K^\dagger)^{\ \gamma}_{\beta} P_{\gamma\sigma}=\\
\left(\begin{array}{cccc}
0&0&0&0\\
0&0&\delta d&0\\
0&\delta d&0&0\\
0&0&0&0
\end{array}\right)-\left(\begin{array}{cccc}
0&0&0&0\\
0&0&\delta d&0\\
0&\delta d&0&0\\
0&0&0&0
\end{array}\right)=0
\end{multline*}\,.
\end{exa}
Relation \eqref{PK} allows us to prove the so called \textit{consistency conditions}\footnote{I would like to thank Jochen Zahn for enlightening discussions about the importance of consistency conditions and for crucial remarks on the proof of proposition \ref{gauge:inv:Delta}.}, formulated first in \cite{H} in the case of Yang Mills theory and generalized in \cite{BRZ}. 
\begin{prop}\label{gauge:inv:Delta}
Assume that \eqref{PK} holds and that $S_0$ induces a normally hyperbolic system of equations of motion: $P\ph=0$. Let $\Delta^*$ be the retarded, the advanced or the causal propagator corresponding to $P$. Then $\Delta^*$ satisfies the consistency conditions:
\be\label{const:cond}
\sum_\sigma((-1)^{|\ph^\al|}K^{\al}_{\ \sigma}(x')\Delta^*(x',x)^{\sigma\gamma}+K^{\gamma}_{\ \sigma}(x)\Delta^*(x',x)^{\al\sigma})=0\,.
\ee
\begin{proof} First, we prove the property \eqref{const:cond} for $\Delta^R$.  We act with $\eqref{PK}$ on $\Delta^R$, which yields
\[
(-1)^{|\ph^\al|}P_{\al\gamma}(z)\circ K^\gamma_{\ \sigma}(z)\Delta^R(z,x)^{\sigma\beta}=-K^\gamma_{\ \al}(z)^*\delta(z-x)\delta_\gamma^\beta\,.
\]
We multiply both sides of the above identity with $\Delta^A(z,y)^{\al\mu}$. The integration over $z$ results in
\[
(-1)^{|\ph^\al|}\int\Delta^A(z,y)^{\al\mu}P_{\al\gamma}(z)\circ K^\gamma_{\ \sigma}(z)\Delta^R(z,x)^{\sigma\beta}d\mu(z)=-K^\beta_{\ \al}(x)\Delta^A(x,y)^{\al\mu}\,.
\]
Next, we use the integration by parts to make $P$ act on $\Delta^A$ from the left. This is possible, since (due to support properties of $\Delta^A$ and $\Delta^R$) the integrant is supported in the intersection of the past of $y$ and the future of $x$, which is a compact set. We obtain
 \[
K^\mu_{\ \sigma}(y)\Delta^R(y,x)^{\sigma\beta}=-K^\beta_{\ \al}(x)\Delta^A(x,y)^{\al\mu}\,.
\]
Using the relation between the retarded and advanced propagators we rewrite the above expression as
 \[
(-1)^{|\ph^\al|}K^\mu_{\ \sigma}(y)\Delta^R(y,x)^{\sigma\beta}+K^\beta_{\ \al}(x)\Delta^R(y,x)^{\mu\al}\,.
\]
The same follows for $\Delta^A$ and also for the difference of the two.
\end{proof}
\end{prop}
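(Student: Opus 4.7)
The plan is to exploit identity \eqref{PK} by letting both sides act on propagators of $P$ and using the defining relations $P\circ \Delta^{A/R} = \id$. The retarded, advanced and causal cases should follow by essentially the same argument, with the causal case following from linearity once the other two are settled, so I would concentrate on $\Delta^R$ first.

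First, I would read \eqref{PK} as a distributional kernel identity in a single variable $z$ and apply it from the left to $\Delta^R(z,x)^{\sigma\beta}$, viewed as a function of $z$ with $x$ fixed. This gives
\[
(-1)^{|\ph^\al|} P_{\al\ga}(z)\, K^\ga_{\ \si}(z)\, \Delta^R(z,x)^{\si\bet} \;=\; -(K^\dagger)^{\ \ga}_{\al}(z)\, P_{\ga\si}(z)\, \Delta^R(z,x)^{\si\bet}\,,
\]
and the right-hand side collapses via $P\circ\Delta^R = \id$ to a local term supported on $z=x$. This already encodes the content of \eqref{PK} in a form that involves $\Delta^R$ directly.

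Next, I would pair both sides against $\Delta^A(z,y)^{\al\mu}$ and integrate over $z$. The crucial analytic point is that the integrand is supported in $J^+(x)\cap J^-(y)$, which is compact by global hyperbolicity; this legitimises an integration by parts that transfers $P$ from $\Delta^R$ onto $\Delta^A$ on the left-hand side. Using $P\circ\Delta^A = \id$ on the transferred term and collapsing the delta distribution on the right-hand side yields an identity of the schematic form $K\cdot\Delta^R(y,x) = -K^\dagger\cdot\Delta^A(x,y)$ with the indices contracted as appropriate.

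Finally, I would rewrite this using the standard symmetry property $\Delta^A(x,y)^{\al\mu}$ and $\Delta^R(y,x)^{\mu\al}$ for a normally hyperbolic operator, which converts $K^\dagger\cdot\Delta^A(x,y)$ into the second term of \eqref{const:cond}. The corresponding identity for $\Delta^A$ is obtained by swapping the roles of $\Delta^R$ and $\Delta^A$ in the pairing step, and the statement for $\Delta = \Delta^R - \Delta^A$ then follows by linearity. The main obstacle I anticipate is bookkeeping of graded signs: \eqref{PK} carries a $(-1)^{|\ph^\bet|}$, the definition of $K^\dagger$ introduces an extra $(-1)^{|\ph^\al||\ph^\bet|}$, and reinterpreting $\Delta^A(x,y)$ as a transpose of $\Delta^R(y,x)$ produces further signs from interchanging graded indices; matching all of these to recover exactly the prefactor $(-1)^{|\ph^\al|}$ in \eqref{const:cond} is the only delicate part of the argument.
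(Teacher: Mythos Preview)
Your proposal is correct and follows essentially the same route as the paper: apply \eqref{PK} to $\Delta^R$, pair with $\Delta^A$ and integrate, invoke compactness of $J^+(x)\cap J^-(y)$ to justify integrating $P$ by parts onto $\Delta^A$, and then use the relation between $\Delta^A(x,y)$ and $\Delta^R(y,x)$ to obtain \eqref{const:cond}; the advanced and causal cases follow in the same way. Your remark that the only subtlety is the graded sign bookkeeping is also exactly where the paper's argument is least explicit.
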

\begin{exa}[Free electromagnetic field]
For the free electromagnetic field, the retarded and the advanced propagators take the form:
\[
\De^{R/A}=\left(\begin{array}{cccc}
\De_v^{R/A}&d\De_v^{R/A}&0&0\\
\delta\De_v^{R/A}&-d\delta\De_v^{R/A}&0&-i\\
-\delta&0&0&i\De_s^{R/A}\\
0&0&-i\De_s^{R/A}&0
\end{array}\right)\,,
\]
where $\De_v^{R/A}$ are the propagators corresponding to the Laplace operator $\de d+d\de$ acting on 1-forms and  $\De_s^{R/A}$ are the propagators of $\de d$ acting on 0-forms. Property \eqref{const:cond} is expressed as
\[
d_x\De_s(x,y)+\delta_y\De_v(x,y)=0\,.
\]
\end{exa}

We will show now that \eqref{const:cond} is a necessary condition for $\{.\theta_0\}_\star$ to be well defined on local functionals. Recall that the  $\star$-antibracket with $\theta_0$ is given by
\be\label{star:br:th0}
\{.,\theta_0(f)\}_\star=\{.,\theta_0(f)\}+\int\! d\mu(x,y,x') \Delta(y,x)^{\sigma\gamma}\frac{{\delta^l}^2 \theta_0(f)}{\delta\ph_\sigma(y)\delta\ph_\al^\dgr(x')}\frac{{\delta^r}^2}{\delta\ph^\gamma(x)\delta\ph^\al(x')}\,,
\ee
Since $\theta_0$ is local, the second term is not well defined for local arguments, hence we require that it vanishes identically.  Note that we can write this term as
\begin{multline}
\int\! d\mu(x,y,x') f(y)\Delta(y,x)^{\sigma\gamma}(-1)^{|\ph^\al|+1}K^{\al}_{\phantom{\al}\sigma}(x')\delta(y-x')\frac{{\delta^r}^2}{\delta\ph^\gamma(x)\delta\ph^\al(x')}=\\=\int\! d\mu(x,x') (-1)^{|\ph^\al|+1}K^{\al}_{\phantom{\al}\sigma}(x')(f(x')\Delta(x',x)^{\sigma\gamma})\frac{{\delta^r}^2}{\delta\ph^\gamma(x)\delta\ph^\al(x')}\,.
\end{multline}
We can express the above formula as a sum of two terms and in the second term we rename the indices $\al$ and $\gamma$. We obtain
\begin{multline*}
\int\! d\mu(x,x')\left( (-1)^{|\ph^\al|+1}K^{\al}_{\phantom{\al}\sigma}(x')(f(x')\Delta(x',x)^{\sigma\gamma})\frac{{\delta^r}^2}{\delta\ph^\gamma(x)\delta\ph^\al(x')}\right.+\\
+\left.(-1)^{|\ph^\gamma|+1}K^{\gamma}_{\phantom{\gamma}\sigma}(x')(f(x')\Delta(x',x)^{\sigma\al})\frac{{\delta^r}^2}{\delta\ph^\al(x)\delta\ph^\gamma(x')}\right)\,.
\end{multline*}
Next we use the graded antisymmetry of $\Delta$ and in the second term we rename the integration variables $x$ and $x'$. This results in
\begin{multline}\label{second:term}
\int\! d\mu(x,x')\left( (-1)^{|\ph^\al|+1}K^{\al}_{\phantom{\al}\sigma}(x')(f(x')\Delta(x',x)^{\sigma\gamma})\right.+\\
+\left.K^{\gamma}_{\phantom{\gamma}\sigma}(x)(f(x)\Delta(x',x)^{\al\sigma})\right)\frac{{\delta^r}^2}{\delta\ph^\al(x')\delta\ph^\gamma(x)}\,.
\end{multline}
Now we note that in the definition of $\{X,\theta_0\}_\star$, $f$ has to be chosen to be identically 1 on the support of $X$, so \eqref{second:term} vanishes due to identity \eqref{const:cond} of lemma \ref{gauge:inv:Delta} and the following identities hold:
\[
\{.,\theta_0\}_\star=\{.,\theta_0\}=\gamma_0\,.
\]
We can say that the classical BRST symmetry survives in the free quantized theory.
Next we have to check if $\gamma_0$ is a derivation with respect to the $\star$-product. This is done in the following proposition. 
\begin{prop}\label{theta:derivation}
Let $S_0$ be the quadratic term of the action with $\#\af=0$ and let $\gamma_0$ be the free BRST operator. 
Assume that \eqref{const:cond} holds. Then, for $X,Y\in\BV_{\reg}$:
\be\label{theta:deriv0}
\gamma_0(X\star Y)=\gamma_0X\star Y+(-1)^{|X|}X\star\gamma_0Y\,.
\ee
\begin{proof}
Recall that $X\star Y\doteq m\circ \exp({i\hbar \Gamma'_\Delta})(X\otimes Y)$. For simplicity we write the proof for $X$ and $Y$ even. The general case differs by introducing some additional signs. From the graded Leibniz rule follows that
\[
\gamma_0\circ m (X\otimes Y)=m\circ(\gamma_0\otimes 1+1\otimes\gamma_0)(X\otimes Y)\,.
\]
Clearly $\gamma_0$ is a derivation if
\[
m\circ\big((\gamma_0\otimes 1+1\otimes\gamma_0)\circ\exp({i\hbar \Gamma'_\Delta})\big)=m\circ\big(\exp({i\hbar \Gamma'_\Delta})\circ(\gamma_0\otimes 1+1\otimes\gamma_0)\big)
\]
holds. Inserting \eqref{gamma0} we obtain the condition
\begin{multline*}
\int \Delta^{\al\beta}(x,y){K}^{\gamma}_{\ \sigma}(z)\delta(x-z)\delta_{\al\sigma}\frac{\delta_l}{\delta\ph^\gamma(z)}\otimes\frac{\delta_l}{\delta\ph^\beta(y)}d\mu(x,y,z)+\\+\int \Delta^{\al\beta}(x,y){K}^{\gamma}_{\ \sigma}(z)\delta(y-z)\delta_{\beta\sigma}\frac{\delta_r}{\delta\ph^\al(x)}\otimes\frac{\delta_l}{\delta\ph^\gamma(z)}d\mu(x,y,z)=0\,.
\end{multline*}
Next we change one of the derivatives in the first term from a left to a right one and we perform the integrations over the delta distributions. The above condition becomes:
\begin{multline*}
\int {K}^{\gamma}_{\ \beta}(x)\Delta^{\beta\al}(x,y)(-1)^{|\ph^\gamma|}\frac{\delta_r}{\delta\ph^\gamma(x)}\otimes\frac{\delta_l}{\delta\ph^\al(y)}d\mu(x,y)+\\+\int {K}^{\gamma}_{\ \beta}(y)\Delta^{\al\beta}(x,y)\frac{\delta_r}{\delta\ph^\al(x)}\otimes\frac{\delta_l}{\delta\ph^\gamma(y)}d\mu(x,y)=0\,.
\end{multline*}
Renaming the summation indices we see that the above condition is fulfilled, if $\Delta$ satisfies:
\[
(-1)^{|\ph^\gamma|}{K}^{\gamma}_{\ \beta}(x)\Delta^{\beta\al}(x,y)+{K}^{\al}_{\ \beta}(y)\Delta^{\gamma\beta}(x,y)=0\,,
\]
which is exactly \eqref{const:cond}.
\end{proof}\end{prop}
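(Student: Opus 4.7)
The plan is to reduce the derivation property for $\star$ to a commutator identity between $\gamma_0$ (extended to tensor products in the Leibniz way) and the bidifferential operator $\Gamma'_\Delta$ that generates the star product, and then to observe that this commutator identity is precisely the consistency condition \eqref{const:cond} already established in Proposition \ref{gauge:inv:Delta}.

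More concretely, first I would recall that $X\star Y = m\circ \exp(i\hbar \Gamma'_\Delta)(X\otimes Y)$, so it is enough to understand how $\gamma_0$ intertwines with $m$ and with $\exp(i\hbar \Gamma'_\Delta)$. Because $\gamma_0$ is a graded vector field on $\overline{\E}$, the pointwise multiplication $m$ already satisfies $\gamma_0\circ m = m\circ (\gamma_0\otimes 1 + (-1)^{|\cdot|}\otimes \gamma_0)$. Thus the identity $\gamma_0(X\star Y)=\gamma_0 X\star Y + (-1)^{|X|}X\star \gamma_0 Y$ is equivalent to the statement that $(\gamma_0\otimes 1 + 1\otimes \gamma_0)$ commutes with $\exp(i\hbar \Gamma'_\Delta)$ on the (algebraic) tensor product, which in turn follows (and is equivalent, on regular functionals) from the commutation of this operator with $\Gamma'_\Delta$ itself.

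Second, I would compute the commutator $[\gamma_0\otimes 1 + 1\otimes \gamma_0,\, \Gamma'_\Delta]$ using the explicit expression $\gamma_0 = K^{\al}_{\ \sigma}\Phi_x^\sigma\frac{\delta}{\delta\ph^\al(x)}$ from \eqref{gamma0} and $\Gamma'_\Delta=\left<\Delta^{\al\bet},\frac{\delta^l}{\delta\ph^\al}\otimes\frac{\delta^r}{\delta\ph^\bet}\right>$. Only the ``evaluation part'' $\Phi^\sigma_x$ in $\gamma_0$ fails to commute with the functional derivatives inside $\Gamma'_\Delta$; these contractions produce two terms, one from each tensor factor, each supported at a single point via a $\delta$-distribution. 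After integrating out the $\delta$'s, converting one left derivative to a right derivative (producing a sign $(-1)^{|\ph^\gamma|}$), and relabelling summation indices, the commutator collapses to the requirement
\[
(-1)^{|\ph^\gamma|}K^{\gamma}_{\ \beta}(x)\Delta^{\beta\al}(x,y) + K^{\al}_{\ \beta}(y)\Delta^{\gamma\beta}(x,y) = 0,
\]
which is precisely \eqref{const:cond}. Given this, the commutator vanishes, so $(\gamma_0\otimes 1+1\otimes \gamma_0)$ commutes with every power of $\Gamma'_\Delta$ and hence with $\exp(i\hbar\Gamma'_\Delta)$, yielding the claimed derivation property.

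The main technical obstacle I anticipate is keeping the graded signs fully consistent: both the conversion between left and right functional derivatives and the swap of tensor factors involved in identifying the two terms contribute factors of $(-1)^{|\ph^\al||\ph^\bet|}$, and the consistency condition itself has a specific sign pattern tied to $\#\gh(K^\alpha_{\ \sigma})=1$. The route I would take is to first carry out the whole calculation in the even case (as in the regular formulation) to verify the algebraic shape of the identity, and only then install the signs, checking that the graded antisymmetry of $\Delta$ combines correctly with the parity of $K$ so that the final condition is exactly \eqref{const:cond} rather than a sign-shifted variant. The extension from $\Gamma'_\Delta$ to $\exp(i\hbar\Gamma'_\Delta)$ is then automatic by induction on powers, and regularity of $X,Y$ ensures all intermediate expressions are well defined.
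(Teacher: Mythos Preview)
Your proposal is correct and follows essentially the same approach as the paper: both reduce the derivation property to the intertwining of $(\gamma_0\otimes 1+1\otimes\gamma_0)$ with the bidifferential operator $\Gamma'_\Delta$ generating $\star$, compute the resulting commutator explicitly using \eqref{gamma0}, and identify the vanishing condition with \eqref{const:cond}. Your account is slightly more explicit in isolating the commutator with a single power of $\Gamma'_\Delta$ and then passing to the exponential by induction, but this is exactly what the paper's computation amounts to.
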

We have seen that consistency conditions \eqref{const:cond} are necessary for the BRST construction in the free theory and that they follow automatically for the causal propagator $\Delta$. Now let $\omega=\tfrac{i}{2}\Delta+H$ be a 2-point function  of some quasifree Hadamard state. We introduce a following definition
\begin{df}
A Hadamard 2-point function $\omega$ is said to be 
\begin{enumerate}
\item \textbf{gauge invariant} if it satisfies the condition analogous to the one fulfilled by $\Delta$:
\be\label{const:cond2}
\sum_\sigma((-1)^{|\ph^\al|}K^{\al}_{\ \sigma}(x')\omega(x',x)^{\sigma\gamma}+K^{\gamma}_{\ \sigma}(x)\omega(x',x)^{\al\sigma})=0\,,
\ee
\item \textbf{gauge invariant modulo a smooth function} if it satisfies
\be\label{const:cond3}
\sum_\sigma((-1)^{|\ph^\al|}K^{\al}_{\ \sigma}(x')\omega(x',x)^{\sigma\gamma}+K^{\gamma}_{\ \sigma}(x)\omega(x',x)^{\al\sigma})=0\ \textrm{mod }\Ci\textrm{ function}\,.
\ee
\end{enumerate}
\end{df}
Let us now explain in detail why the consistency conditions are needed. Note that
\[
\{X,Y\}_{\star_H}=\al_H\{\al_H^{-1}X,\al_H^{-1}Y\}_{\star}\,,
\]
and since $\al_H^{-1}\theta_0(f)=\theta_0(f)$, we obtain
\[
\{X,\theta_0(f)\}_{\star_H}=\al_H\{\al_H^{-1}X,\theta_0(f)\}_{\star}
\]
for regular $X$. If \eqref{PK} holds, we have
\[
\{X,\theta_0(f)\}_{\star_H}=\al_H\{\al_H^{-1}X,\theta_0(f)\}=(\al_H\circ\gamma_0\circ\al_H^{-1})(X)\,.
\,,
\]
Let us denote $\gamma_0^H\doteq\al_H\circ \gamma_0\circ\al_H^{-1}$. Since
\[
\{\al_H^{-1}X,\theta_0\}_{\star}=\gamma_0(\al_H^{-1}X)=\al_H^{-1}\circ \gamma_0^H X\,,
\]
we can interpret $\gamma_0$ on $\fA$ as the normal ordered counterpart of $\gamma_0^H$ on $\fA_H$. 
Now, let us take an arbitrary (not necessarly regular) $F\in\fA_H(\Ocal)$ and express it as a limit of the series of regular functionals $F=\lim_{n\rightarrow\infty}F_n$. Since $F_n$'s are regular,
\be\label{gamma:H}
\gamma_0^HF_n=\{F_n,\theta_0\}+\int\frac{\delta^2F_n}{\delta\ph^\al(x)\delta\ph^\beta(y)} H^{\beta\gamma}(y,z)\frac{\delta^2\theta_0}{\delta\ph^\gamma(z)\delta\ph^\dag_\al(x)}d\mu(x,y,z)
\ee
holds. The second term in the above expression is not well defined for local $F$, due to singularities of $H$. For $\gamma_0^H$ to be well defined on the full space $\fA_H(M)$, we have to require that $\int {H}^{\beta\gamma}(y,z)\frac{\delta^2\theta_0}{\delta\ph^\al(z)\delta\ph^\dag_\al(x)}d\mu(z)$ vanishes in the coinciding point limit $x\rightarrow y$, modulo a smooth function. Using the graded symmetry of the second derivative, we find that this requirement is
 equivalent to the condition \eqref{const:cond3}. Let us assume that there exists at least one $H$ for which $\omega=\frac{i}{2}\Delta+H$ fulfills \eqref{const:cond3}.
Then, if we take an arbitrary parametrix $\omega'=\frac{i}{2}\Delta+H'$, such that $H-H'$ is smooth, expression \eqref{gamma:H} has a well defined limit as well and $\gamma_0^{H'}F$ is well defined for all $F\in\fA_{H'}(\Ocal)$.

We have seen that the existence of $\gamma_0^H$ on $\fA_H(\Ocal)$ requires the condition \eqref{const:cond3} to be fulfilled. One can reach exactly the same conclusion working directly with $\gamma_0$ on $\fA(\Ocal)$. By the definition of the initial topology on $\fA(\Ocal)$ we know that the limit of $\gamma_0(\al_H^{-1}F_n)$ exists as an element of $\fA(\Ocal)$ if there exists an $H'$ such that $(\al_{H'}\circ\gamma_0\circ\al_H^{-1})(F_n)$ converges in $\fA_{H'}(\Ocal)$. Let us write this expression in a different way:
\begin{multline*}
(\al_{H'}\circ\gamma_0\circ\al_H^{-1})(F_n)=(\al_{H'-H}\circ\gamma^H_0)(F_n)=\\=\{\al_{H'-H}F_n,\theta_0\}+\int\frac{\delta^2(\al_{H'-H}F_n)}{\delta\ph^\al(x)\delta\ph^\al(y)} {H'}^{\beta\gamma}(y,z)\frac{\delta^2\theta_0}{\delta\ph^\al(z)\delta\ph^\dag_\al(x)}d\mu(x,y,z)\,.
\end{multline*}
If \eqref{const:cond2} is fulfilled, then the second term in the above expression vanishes and  $(\al_{H'}\circ\gamma_0\circ\al_H^{-1})(F_n)=\{\al_{H'-H}F_n,\theta_0\}$ converges to a microcausal functional in $\fA_{H'}(\Ocal)$, so $\lim_{n\rightarrow\infty}\gamma_0(\al_H^{-1}F_n)$ is a well defined element of $\fA(\Ocal)$.

Let us now discuss another possibility to define the BRST operator on $\fA_H(\Ocal)$ by using $\gamma_0$ instead of $\gamma_0^H$. In other words, we subtract ``by hand'' the singular term in \eqref{gamma:H}. For this to work, one has to prove that $\gamma_0$ is a derivation on $\fA_H(\Ocal)$. Here the consistency condition \eqref{const:cond2} enters again. We replace $\Delta$ by $\omega$ in theorem \ref{theta:derivation} and conclude that a sufficient condition for $\gamma_0$ to be a derivation with respect to $\star_H$ is \eqref{const:cond2}, which is equivalent to the requirement that $\gamma_0$ commutes with $\al_H$, i.e. $\al_H$ induces a cochain morphism. In general, this seems to be too strong, since we want to work with $\omega$ which is a parametrix but not a bisolution and one expects that \eqref{const:cond3} rather than \eqref{const:cond2} holds. Therefore it is more natural to work with $\gamma_0^H$ instead of $\gamma_0$.

To summarize, consistent BV quantization of the free theory can be performed if we can show the existence of at least one quasifree Hadamard state with a 2-point function satisfying \eqref{const:cond2}. This problem has not yet been solved in full generality\footnote{For Yang-Mills theory with a trivial principal bundle and $0$ background section one can use a deformation argument of \cite{FNW}, as it was done in \cite{H}. Problems start, however, if one allows nontrivial topology of principal bundles of the theory and considers arbitrary background connections. This issue is currently investigated by Jochen Zahn. Another example of a theory where the existence of a gauge invariant  Hadamard 2-point function is not clear is perturbative quantum gravity \cite{BFR}. }. Therefore, we think that it is more convenient to use $S_0$ as the free action, since this choice doesn't require any additional conditions. 
\subsection{Changing the free theory}\label{changing}
In section \ref{gauge:inv} we have shown that the {\cme} of the free theory is a necessary condition that allows us to construct the free quantum theory corresponding to action $S_0+\theta_0$. Now we want to include the interaction into the discussion. It was proven in \cite{FR3} by K.~Fredenhagen and myself that the quantum master equation is a necessary condition for the gauge invariance of the interacting theory. There, we considered the perturbation around the free action $S_0$. In this section, we  show that  full {\qme} can be equivalently formulated for $S_0+\theta_0$, provided the {\qme} of the free theory holds. For the beginning, we consider only the regular functions $\BV_{\reg}$ and the non-renormalized time ordered product. Let $V,\, \theta_0\in\BV_{\reg}$ and denote $\tilde{V}\doteq V-\theta_0$. The {\qme} is the condition that:
\be\label{QME:nonren}
e_{\sst{\TT}}^{-i(\tilde{V}+\theta_0)/\hbar}\T\left(\{e_{\sst{\TT}}^{i (\tilde{V}+\theta_0)/\hbar},S_0\}_{\star}\right)=0\,.
 \ee
Using properties of $\T$ and $\star$ we can rewrite this condition as:
 \begin{multline*}
e_{\sst{\TT}}^{-i(\tilde{V}+\theta_0)/\hbar}\T\left(\{e_{\sst{\TT}}^{i (\tilde{V}+\theta_0)/\hbar},S_0\}_{\star}\right)=\\
e_{\sst{\TT}}^{-i(\tilde{V}+\theta_0)/\hbar}\T\left(\{e_{\sst{\TT}}^{i\tilde{V}/\hbar}\T e_{\sst{\TT}}^{i\theta_0/\hbar},S_0\}_{\sst{\TT}}+i\hbar\Lap_{\mathrm{nren}}(e_{\sst{\TT}}^{i\tilde{V}/\hbar}\T e_{\sst{\TT}}^{i\theta_0/\hbar})\right)=\\
e_{\sst{\TT}}^{-i\theta_0/\hbar}\T\left(\{e_{\sst{\TT}}^{i\theta_0/\hbar},S_0\}_{\sst{\TT}}+i\hbar\Lap_{\mathrm{nren}}(e_{\sst{\TT}}^{i\theta_0/\hbar})\right)+\\+e_{\sst{\TT}}^{-i\tilde{V}/\hbar}\T\left(\{e_{\sst{\TT}}^{i\tilde{V}/\hbar},S_0+\theta_0\}_{\sst{\TT}}+i\hbar\Lap_{\mathrm{nren}}( e_{\sst{\TT}}^{i\tilde{V}/\hbar})\right)=\\
e_{\sst{\TT}}^{-i\theta_0/\hbar}\T\left(\{e_{\sst{\TT}}^{i \theta_0/\hbar},S_0\}_{\star}\right)+e_{\sst{\TT}}^{-i\tilde{V}/\hbar}\T\left(\{e_{\sst{\TT}}^{i\tilde{V}/\hbar},S_0+\theta_0\}_{\sst{\TT}}+i\hbar\Lap_{\mathrm{nren}}( e_{\sst{\TT}}^{i\tilde{V}/\hbar})\right)
\,.
 \end{multline*}
If the {\qme} of the free theory holds, i.e. if $e_{\sst{\TT}}^{-i\theta_0/\hbar}\T\left(\{e_{\sst{\TT}}^{i \theta_0/\hbar},S_0\}_{\star}\right)=0$, then \eqref{QME:nonren} is equivalent to:
 \begin{multline}\label{QME:modified}
 e_{\sst{\TT}}^{-i\tilde{V}/\hbar}\T\left(\{e_{\sst{\TT}}^{i\tilde{V}/\hbar},S_0+\theta_0\}_{\sst{\TT}}+i\hbar\Lap_{\mathrm{nren}}( e_{\sst{\TT}}^{i\tilde{V}/\hbar})\right)=\\= e_{\sst{\TT}}^{-i\tilde{V}/\hbar}\T\left(\{e_{\sst{\TT}}^{i\tilde{V}/\hbar},S_0\}_{\star}+\{e_{\sst{\TT}}^{i\tilde{V}/\hbar},\theta_0\}_{\sst{\TT}}\right)=0\,.
 \end{multline}
It was proven in section \ref{gauge:inv} that the free {\cme} implies $\{e_{\sst{\TT}}^{i\tilde{V}/\hbar},\theta_0\}_{\sst{\TT}}=\{e_{\sst{\TT}}^{i\tilde{V}/\hbar},\theta_0\}_{\star}$, so finally we can write \eqref{QME:modified} as:
\[
 e_{\sst{\TT}}^{-i\tilde{V}/\hbar}\T\left(\{e_{\sst{\TT}}^{i\tilde{V}/\hbar},S_0+\theta_0\}_{\star}\right)=0\,.
\]
After this short introduction we can come back to the discussion of the renormalized time-ordered product. To distinguish it from the non-renormalized one, we denote it in this subsection by $\TR$. 

First we have to check if the free {\qme} can be satisfied by exploiting the renormalization freedom which we have in defining  $\TR$. 
Note that, since $\theta_0$ is linear in both fields and antifields and is assumed to be of degree 0, the component $\theta^\alpha(x)$ doesn't depend on field $\ph^\alpha(x)$. It follows that $\{\theta_{0},\theta_0\}=0$ and using the anomalous {\mwi} \eqref{MWI} we obtain:
\[
e_{\sst{\TTR}}^{-i\theta_{0}/\hbar}\TR\{e_{\sst{\TTR}}^{i\theta_{0}/\hbar},S_0\}_\star=\{\theta_{0},S_0\}_{\sst{\TTR}}+i\hbar\Lap({\theta_{0}})\,,
\]
where $\Lap({\theta_0})$ is the anomaly term. Using the standard arguments of \cite{H,FR3} we can conclude that it must be constructed from elements of the relative cohomology $H^1(\gamma_0|d)$ on the space of local forms. If this cohomology is trivial, then the free theory is anomaly free, i.e. we can use the renormlization freedom to redefine the time-ordered powers of  $\theta_0$ to obtain $\Lap({\theta_0})=0$, so the free {\qme} holds as a consequence of free {\cme}.

Now, we want to repeat the reasoning that led to equation \eqref{QME:modified} for the renormalized time-ordered product. To this end, we use the MWI \eqref{MWI} and replace $\Lap_{\textrm{nren}}$ with renormalized BV Laplacians. The anomaly term corresponding to the free action $S_0$ will be denoted by $\Lap({V})$ and the one of $S_0+\theta_0$ by $\tilde{\Lap}({\tilde{V}})$. The defining equation for $\tilde{\Lap}({\tilde{V}})$ is:
\be\label{MWI3}
\{e_{\sst{\TTR}}^{i\tilde{V}/\hbar},S_0+\theta_0\}_\star=\{ e_{\sst{\TTR}}^{i\tilde{V}/\hbar},S_0+\theta_0\}_{\sst{\TTR}}+e_{\sst{\TTR}}^{i\tilde{V}/\hbar}\TR(\tilde{\Lap}({\tilde{V}})+\frac{i}{2\hbar}\{\tilde{V},\tilde{V}\}_{\sst{\TTR}})\,.
\ee
The existence and properties of the anomaly term $\tilde{\Lap}({\tilde{V}})$ were proven in the case of Yang-Mills theory in \cite{H}. Analogous arguments can be also used in a more general setting. A short calculation yields
\begin{multline*}
e_{\sst{\TTR}}^{-iV/\hbar}\TR\left(\{e_{\sst{\TTR}}^{i V/\hbar},S_0\}_{\star}\right)=e_{\sst{\TTR}}^{-i\tilde{V}/\hbar}\TR\left(\{e_{\sst{\TTR}}^{i \tilde{V}/\hbar},S_0\}_{\star}\right)+\\+e_{\sst{\TTR}}^{-i\theta_0/\hbar}\TR\left(\{e_{\sst{\TTR}}^{i+\theta_0/\hbar},S_0\}_{\star}\right)+\Lap({V})-\tilde{\Lap}({\tilde{V}})\,.
 \end{multline*}
This shows that, if the free QME holds, and $\Lap({V+\theta_0})-\tilde{\Lap}(\tilde{V})$ can be removed with an appropriate redefinition of renormalized time-ordered products, then $e_{\sst{\TTR}}^{-iV/\hbar}\TR\left(\{e_{\sst{\TTR}}^{i V/\hbar},S_0\}_{\star}\right)=0$ is equivalent to $e_{\sst{\TTR}}^{-i\tilde{V}/\hbar}\TR\left(\{e_{\sst{\TTR}}^{i \tilde{V}/\hbar},S_0+\theta_0\}_{\star}\right)=0$. 
\section{BRST charges}
\subsection{Different notions of a BRST charge}
We start this section with an overview of different approaches to quantization of gauge theories and different notions of a BRST charge in {\paqft}. We compare the approaches of \cite{DF99} and \cite{H}, using the general BV quantization framework proposed in \cite{FR3}. There is an important difference between the free BRST operator $\gamma_0$ used by Hollands in \cite{H} and the one discussed in \cite{FR}. The full BV operator 
$s$ is the same, but in \cite{FR,FR3} (following \cite{Barnich:1999cy}) $s_0$ is expanded with respect to the total antifield number $\#\ta$ (see equation \eqref{ta:expansion}), while in 
\cite{H} $s_0$ is expanded with respect to $\#\af$ \textit{also for the gauge fixed theory}. Explicitly, one has $s_0=\tilde{\delta}_0+\tilde{\gamma}_0$, where $\tilde{\gamma}_0$ is the term with $\#\af=0$ and $\tilde{\delta}_0$ has  $\#\af<0$.
We argue that the expansion with respect to $\#\ta$  is physically more justified, since the $\#\ta=-1$ term of this expansion, denoted here by $\delta$, is the Koszul operator corresponding to the gauge fixed system of equation of motion. This allows one to view the algebra of on-shell functionals as the 0-th cohomolgy of $(\BV,\delta)$; this interpretation is not possible if one considers the expansion used in \cite{H}. It is, therefore, not clear how going on-shell in \cite{H} is interpreted in cohomological terms, already at the level of linearized equations of motion. Let $C^\ddagger$ be the antifield of the ghost. Following \cite{H} we obtain $\tilde{\delta_0}(C^\ddagger)=id*d\bar{C}_I-dA^\ddagger_I$, so the equations of motion corresponding to the image of $\tilde{\delta_0}$ contain a source term $dA^\ddagger_I$ which is not present in the equations of motion employed in  \cite{H}  for the construction of the causal propagator and the star product. This problem is not present in \cite{FR,FR3}, where we have ${\delta_0}(C^\ddagger)=id*d\bar{C}_I$ and ${\gamma_0}(C^\ddagger)=-dA^\ddagger_I$, in contrast to ${\tilde{\gamma}_0}(C^\ddagger)=0$.

It was shown in \cite{FR} that, already at the classical level, we have two (graded) Poisson brackets: the Peierls bracket $\lfloor.,.\rfloor$ and the antibracket $\{.,.\}$. The Peierls bracket is induced by the equations of motion (dynamics) and the 
antibracket is of geometrical nature: it is the graded Schouten bracket. The BV operator $s$ can locally be written as the antibracket with the extended action $S_\ex$. In particular, the BRST operator is generated by $\theta$. Note that one uses $\{.,.\}$, rather than $\lfloor.,.\rfloor$, as the natural graded Poisson structure on $\BV(M)$. This can be compared with the Hamiltonian version of the Batalin-Vilkovisky formalism (BFV formalism, \cite{Batalin:1977pb,Fradkin:1977wv,Fradkin:1975cq,Fradkin:1977hw,Fradkin:1978xi}) where everything is done with relation to one canonical structure. It was shown in \cite{FH2,BGPR} that the Lagrangian and the Hamiltonian formalism are equivalent on the formal level. We believe that a more rigorous argument can be provided within the framework of infinite dimensional symplectic geometry. \cite{BGPR} compares also the Noether charge of the Lagrangian formalism with the BRST  charge of the Hamiltonian formalism. The Noether charge can be used as a generator of the BRST transformation with respect to a certain canonical structure which is defined on both fields and antifields (see formula 3.14 of \cite{BGPR}). In \cite{H} antifields are treated as external fields, so there is no dynamics associated with them and $\lfloor.,.\rfloor$ acts on them trivially. Also in \cite{FR} antifields are non-dynamical, since they are identified with geometrical objects: functional derivatives. Because $\lfloor.,.\rfloor$ is antifield-independent, the classical BRST charge $Q$ generates $\gamma$ with respect to $\lfloor.,.\rfloor$ modulo the equations of motion only on the space of functionals that don't contain antifields.  Let us briefly recall the construction of $Q$. The classical BRST current is defined as 
\begin{multline*}
J^\mu(x)\doteq\sum\limits_{\al}\Big(\gamma\ph^\al\frac{\partial{L_M(x)}}{\partial(\nabla_\mu\ph^\al)}+2\nabla_\nu\gamma\ph^\al\frac{\partial{L_M(x)}}{\partial(\nabla_\mu\nabla_\nu\ph^\al)}+\\
-\nabla_\nu\left(\gamma\ph^\al\frac{\partial{L_M(x)}}{\partial(\nabla_\mu\nabla_\nu\phi^\al)}\right)\Big)+-K^\mu_{M}(x)\,,
\end{multline*}
where $K_{M}^\mu$ is the divergence term appearing after applying $\gamma$ to $L_M(f)$. Following \cite{H} we recall here a useful formula relating $J$ with the BV operator:
\be\label{dJ}
dJ(x)=\sum_\al\{L_\ex(f),\ph^\al(x)\}\cdot\{\ph^\dgr_\al(x),L_\ex(f)\}=\sum_\al \theta^\al(x)\cdot\frac{\delta L_\ex(f)}{\delta\ph^\al(x)}\,,
\ee
where $f(x)=1$. 
\subsection{The free BRST charge}\label{free:charge}
We have already shown in \ref{gauge:inv} that if $\theta_0$ is included into the free action, then additional consistency condition are needed. In particular, \eqref{const:cond2} has to hold and the background configuration $\ph_0$ has to be a solution of the equations of motion. Here, we show that the same conditions allow us to express $\{.,\theta_0\}_\star$ as the commutator with the free BRST charge. 
Let the free BRST current  be denoted by $J_0$.
In a spacetime $M$ with compact Cauchy surface $\Sigma$ there exists a closed compactly supported 1-form $\al$ on $M$ such that $\int_M\al\wedge\beta=\int_\Sigma\beta$, for any closed 3-form $\beta$. In this case, we can define the free BRST charge by
 \[
 Q_0\doteq \int_M\al\wedge J_0\,.
 \]
 In \cite{H} the free quantum BRST operator is defined  directly by giving its action on basic fields and requiring that it is a $\star$-derivation. In the formalism of \cite{FR3} this corresponds to defining the free quantum BRST as $\{.,\tilde{\theta}_0\}_\star$, where  $\tilde{\theta}_0$ denotes the action that generates $\tilde{\gamma}_0$. The difference between 
 \cite{H} and \cite{FR3} lies again in the way in which   the free quantum BRST operator acts on antifields. In \cite{H} we have  $\{\ph^\ddagger_\al,\tilde{\theta}_0\}=0$ for all $\ph_\al^\ddagger$, whereas the formalism of \cite{FR3} applied to Yang-Mills theory yields $\{C^\ddagger,\tilde{\theta}_0\}=-dA^\ddagger$. In a proposition below we show that $\{.,\theta_0\}_\star$ is on-shell equal to $[.,Q_0]_\star$, if the argument has $\#\ta=0$ (i.e. it doesn't contain antifields). In general, however, $Q_0$ is not a generator for $\{.,\theta_0\}_\star$. This does not pose a problem, since Ward identities in \cite{FR3} are formulated in terms of $\{.,\theta_0\}_\star$, not $[.,Q_0]_\star$, so the results of \cite{H} can be applied. Note that $Q_0$ is not necessary for the construction of the abstract net of interacting algebras of observables. It is, however, a crucial concept in the Kugo-Ojima formalism \cite{KuOji0,KuOji}, which is a convenient method to construct states for the free theory. A deformation procedure given in \cite{DF99} allows then to construct states also on the net of local algebras of observables of the interacting theory.
%

From now on we work in the algebraic adiabatic limit, which means that we are interested only in constructing local algebras $\fA(\Ocal)$ and don't discuss the existence of the inductive limit. Therefore, we can apply the idea of \cite{DF99} and embed $\Ocal$ into a spacetime with a compact Cauchy surface, for example into a causal completion of a spacial box. Keeping this in mind we restrict our attention to the situation where $M$ has a compact Cauchy surface.
\begin{prop}\label{Q0}
Let $F\in\BV(M)$ with $\#\ta=0$. Assume that \eqref{PK} and \eqref{const:cond2} hold and that $\gamma_0^2=0$, then the following relation is fulfilled on-shell ($\ \os\,$):
\[
\{F,\theta_0\}_{\star_H}\os\frac{i}{\hbar}[F,Q_0]_{\star_H}\,.
\]
Equivalently, we can write this formula in terms of Wick-ordered expressions $\al_H^{-1}(F)$, $\al_H^{-1}(Q_0)\in \fA(M)$,
\[
\{\al_H^{-1} F,\theta_0\}_{\star}\os\frac{i}{\hbar}[\al_H^{-1}F,\al_H^{-1}Q_0]_{\star}\,
\]
\end{prop}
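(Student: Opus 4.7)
The plan is to show that both sides of the claimed identity, regarded as maps on the subspace of elements with $\#\ta=0$, are graded derivations of $\star_H$, and that they agree on basic field evaluations $\Phi^\alpha(x)$ modulo the free equations of motion. Since products of basic fields generate this subspace under $\star_H$, agreement on generators will force agreement on the whole subspace.

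The left-hand side is the easier one. For $F$ with $\#\ta(F)=0$, the second term in the $\star_H$-analogue of \eqref{star:br:th0} (obtained by replacing $\Delta$ by $\omega$) vanishes identically, because its integrand is controlled by precisely the combination appearing in \eqref{const:cond2}. Hence $\{F,\theta_0\}_{\star_H}=\{F,\theta_0\}=\gamma_0 F$. The proof of Proposition~\ref{theta:derivation} goes through verbatim with $\omega$ in place of $\Delta$, using \eqref{const:cond2}, and shows that this map is a graded $\star_H$-derivation. The right-hand side is automatically a graded $\star_H$-derivation.

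For the comparison on basic fields $\Phi^\alpha(x)$, only the first-order Wick contraction contributes to $[\Phi^\alpha(x),Q_0]_{\star_H}$, because all higher functional derivatives of $\Phi^\alpha$ vanish. A direct computation yields
\[
\tfrac{i}{\hbar}[\Phi^\alpha(x),Q_0]_{\star_H} \;=\; \sum_\beta \int \Delta^{\alpha\beta}(x,y)\,\tfrac{\delta Q_0}{\delta \Phi^\beta(y)}\,d\mu(y) \;=\; \Pei{\Phi^\alpha(x)}{Q_0},
\]
while $\{\Phi^\alpha(x),\theta_0\}_{\star_H}=\gamma_0 \Phi^\alpha(x)=K^\alpha_{\ \sigma}(x)\Phi^\sigma(x)$ on the other side. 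The on-shell equality of these two expressions is the classical Noether identity for the free BRST current. To establish it I would use the free version of \eqref{dJ}, namely $dJ_0=\sum_\alpha \theta_0^\alpha\cdot\tfrac{\delta L_0(f)}{\delta \Phi^\alpha}$, together with Stokes' theorem applied to $Q_0=\int_M \alpha\wedge J_0$. Because $\alpha$ is closed and compactly supported, the functional derivative of $Q_0$ reduces to a Cauchy-surface integral; combined with the defining relation $P\circ\Delta^{R/A}=\id_{\overline{\E}_c}$, integrating $\Delta$ against this surface contribution reproduces $\gamma_0 \Phi^\alpha(x)$ modulo a bulk term proportional to $\delta L_0/\delta \Phi$ which vanishes on-shell.

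The main obstacle is this last Noether step: it requires carefully tracking how the support properties of $\Delta^{R/A}$, the closedness of the cutoff 1-form $\alpha$, and the divergence identity for $J_0$ fit together, and it relies crucially on \eqref{PK} ensuring that $Q_0$ is on-shell conserved (and hence independent of the Cauchy surface, which is what justifies using $\alpha$ in the first place). Once the generator-level agreement is in place, the derivation property, together with the fact that both derivations preserve the ideal generated by the free equations of motion, extends the identity to all $F\in\BV(M)$ with $\#\ta(F)=0$. The equivalent $\star$-picture formulation follows by conjugation with $\al_H$, using that $\al_H^{-1}\theta_0=\theta_0$ and that $\al_H$ is a $*$-algebra isomorphism $(\fA_H,\star_H)\cong(\fA,\star)$.
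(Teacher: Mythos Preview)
Your approach is correct in spirit and genuinely different from the paper's. The paper does not reduce to generators; instead it computes $[F,Q_0]_{\star_H}$ directly for arbitrary $F$ with $\#\ta(F)=0$, exploiting that $Q_0$ is \emph{quadratic} in the fields (so the $\star_H$-commutator terminates at second order in contractions regardless of $F$). Concretely, the paper chooses $\alpha=d\eta$ with $\eta\equiv1$ on $\supp F$ and $\supp\alpha$ in the past of $\supp F$, writes $Q_0(d\eta)=-\int\eta\,dJ_0$, uses the support properties of $\Delta^{R/A}$ to replace $\Delta$ by $\Delta^R$, inserts $dJ_0=\theta_0^\alpha\cdot\tfrac{\delta L_0}{\delta\ph^\alpha}$, and applies $P\circ\Delta^R=\id$ to the second term. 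The first-order contraction gives the classical $\{F,\theta_0\}$ plus an on-shell ideal term, while the second-order contraction (involving $H$) produces exactly the $H$-dependent correction in $\{F,\theta_0\}_{\star_H}$; here the consistency condition \eqref{const:cond3} is invoked to justify exchanging integration and differentiation.

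Your route---establishing that both sides are $\star_H$-derivations and matching them on $\Phi^\alpha$---is valid, and the ``Noether step'' you outline is precisely the paper's computation specialised to linear $F$. The gain is conceptual: the $\hbar^2$ term in the commutator never has to be confronted explicitly, because the Leibniz rule propagates the generator-level identity. The cost is that you must supplement the argument with two things the paper's direct method avoids: (i) a check that both derivations preserve the on-shell ideal (for $\gamma_0$ this is \eqref{PK}; for $[\,\cdot\,,Q_0]_{\star_H}$ it uses that $\omega$ is a bisolution, so the ideal is two-sided), and (ii) a density or generation argument, since $\BV(M)$ contains non-polynomial local functionals which are not literal $\star_H$-polynomials in the $\Phi^\alpha$. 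Neither point is problematic, but you should state them explicitly.
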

\begin{proof}
Since we assume that $\gamma_0^2=0$, the term $\{\theta_0,\theta_0\}$ can be neglected in the definition of the conserved current $J_0$.
\[
dJ_0(x)=\sum_\al \theta_0^\al(f)(x)\cdot\frac{\delta L_0(f)}{\delta\ph^\al(x)}\,,
\]
where $f(x)=1$. Since the definition of the BRST charge doesn't depend on the choice of a 1-form $\al$ dual to the Cauchy surface, we can choose it in a way that will facilitate the calculation. Let us take $\al=d\eta$, where $\al$ is compactly supported, its support lies in the past of the support of $F$
and $\eta=1$ on $\supp\, F$. Now we use the fact that, for the Hadamard function $\omega=$ used to define the $\star$-product, $  \omega^{\alpha \beta}(x,y) - (-1)^{|\phi^\alpha| |\phi^\beta|} \omega^{\beta \alpha}(y,x)=i\Delta^{\al\bet}(x,y)$ holds. Moreover, from the support properties of $\Delta_R$ and $\Delta_A$ follows that we can write the commutator with $Q_0$ as
\begin{align*}
[F,Q_0]_{\star_H}&= i\hbar\sum_{\al,\bet,\sigma}\int dxdz\frac{\delta F}{\delta\ph^\bet(x)}\Delta_R^{\bet\al}(x,z) \frac{\delta Q_0(d\eta)}{\delta\ph^\al(z)}+\\
&+ i\hbar^2\sum_{\al,\bet,\sigma,\atop \mu,\nu}\int dxdz\frac{\delta^2 F}{\delta\ph^\bet(x)\ph^\mu(x')}\Delta_R^{\mu\nu}(x',z')H^{\bet\al}(x,z) \frac{\delta^2 Q_0(d\eta)}{\delta\ph^\al(z)\delta\ph^\nu(z')}\,.
\end{align*}
Now let us choose $f\in\D(M)$ such that $f\equiv 1$ on the support of $F$. Inserting the definition of $Q_0$ and integrating by parts we obtain:
\begin{align*}
\frac{i}{\hbar}[F,Q_0]_{\star_H}=&\sum_{\al,\bet,\sigma}\int\frac{\delta F}{\delta\ph^\bet(x)}\Delta_R^{\al\bet}(x,z)\eta(y) \frac{\delta \theta^{\sigma}_0(f)(y)}{\delta\ph^\al(z)}\frac{\delta S_0}{\delta\ph^\sigma(y)} dxdydz\\
&\sum_{\al,\bet\sigma}\int\frac{\delta F}{\delta\ph^\bet(x)}\Delta_R^{\al\bet}(x,z)\eta(y)\theta^{\sigma}_0(f)(y)\frac{\delta^2 S_0}{\delta\ph^\sigma(y)\delta\ph^\al(z)} dxdydz+\\
&\hbar\sum_{\al,\bet\sigma}\int\frac{\delta^2 F}{\delta\ph^\sigma(y)\delta\ph^\mu(x)}H^{\mu\nu}(x,z)\frac{\delta \theta^{\sigma}_0(f)(y)}{\delta\ph^\nu(z)} dxdydz\,.
\end{align*}
Note that, to obtain the above result, we had to exchange the order of integration and differentiation. This is possible only if we assume
 the consistency condition \eqref{const:cond3}. Without this assumption, the third term of the formula above would not be well defined for local $F$. After performing the integration over $y$ and $z$ in the second term, we arrive at:
\begin{multline*}
\frac{i}{\hbar}[F,Q_0]_{\star_H}=\sum_{\bet}\int\frac{\delta F}{\delta\ph^\bet(x)}\theta^{\bet}_0(x) dx+\\
+\hbar\sum_{\al,\bet\sigma}\int\frac{\delta^2 F}{\delta\ph^\sigma(y)\delta\ph^\mu(x)}H^{\mu\nu}(x,z)\frac{\delta \theta^{\sigma}_0(f)(y)}{\delta\ph^\nu(z)} dxdydz+I_0=\\
=\{F,\theta_0\}_{\star_H}+I_0\,,
\end{multline*}
where $I_0$ is an element of the ideal generated by equations of motion. 
\end{proof}
The result above allows to make contact with the formalism used in \cite{DF99} and \cite{Boas}, where gauge theories are quantized in the BRST formalism, but without introducing antifields. The quantum BRST differential on free fields is defined as the commutator with $Q_0$ and, as we have just seen, this is the same as $\{.,\theta_0\}_{\star_H}$ on the space of functionals with $\#\ta=0$. 
 \subsection{The interacting BRST charge}
Up to now we have treated only the free theory, now we want to construct a charge that generates the quantum BV differential on interacting fields. We also want to drop the assumption on the total antifield number $\#\ta$. We work in the algebraic adiabatic limit, so we pick a bounded region $\Ocal\subset M$ and choose $f\in\D(M)$ such that $f\equiv 1$ on $\Ocal$. We choose $\al=d\eta$ as in the previous section, and require that $\supp(\al)\subset\Ocal$; let $V={S_1}_M(f)$. Interacting fields are defined by formula \eqref{RV}. For such fields we cannot make use of proposition \ref{Q0}, because $V$ in general contains antifields and $\eta$ cannot be chosen to be one on the support of $R_V(F)$. Instead, we can follow \cite{DF99,Boas,H} and use the interacting charge $R_V(Q)$.
Note that our situation is much more general than the cases studied in the literature so far. In the present setting we admit arbitrary theories with local symmetries, which satisfy the {\qme}. These include in particular gravity and the free bosonic string. 

Firstly, we need to prove some identities for time-ordered retarded products. 
For simplicity of notation we omit the subscript ``$M$'' in ${\theta}_M$, ${S_1}_M$, etc. The conservation of the current $R_V(J(x))$ is a condition that $R_V(dJ(x))\os 0$ for $x\in\Ocal$. We have to prove that
\be\label{current:cons}
 e_{\sst \TT}^{iV/\hbar}\T dJ(x)=e_{\sst \TT}^{iV/\hbar}\T \left(\theta^\al(x)\cdot\frac{\delta}{\delta \ph^\al(x)}(S_0+S_1)(f)\right)\os 0\,.
\ee
Let $h\in\D(\Ocal)$. Using the {\mwi} \eqref{Lap:coeff} we obtain an identity fulfilled by the natural transformation $dJ$: 
\begin{multline*}
 e_{\sst \TT}^{iV/\hbar}\T dJ(h)=-i\hbar\int h(x)\left( e_{\sst \TT}^{iV/\hbar}\T \theta^\al(x)\right)\star\frac{\delta S_0}{\delta\ph^\al(x)}d\mu(x)+\\
 +i\hbar\sum_{n=0}^\infty\Lap^{n}(V(f)^{\otimes n}; V(h))\,,
\end{multline*}
If the ``anomaly'' $\sum_{n=0}^\infty\Lap^{n}(V(f)^{\otimes n}; V(h))$ can be removed by a redefinition of time-ordered products, then the above identity implies \eqref{current:cons}. In \cite{H} the anomaly was removed in Yang-Mills theory. In general the interacting current is not conserved in theories which satisfy the QME with a non-zero $\Lap({V(f)})$ that cannot be removed. 
It is an intuitive result, since current conservation is a classical phenomenon and to reproduce it on the quantum level one  has to assume that the quantized theory ``doesn't differ too much'' from the classical one. 

Recall that the map $R_V$ intertwines between the interacting and the free theory and in particular we have:
\[
R_V^{-1}\left(\frac{\delta S_0}{\delta\ph^\al(x)}\right)=\frac{\delta (S_0+V)}{\delta\ph^\al(x)}\,.
\]
Let us denote by $\osV$ an equality that holds modulo the ideal generated, with respect to $\star_V$, by $R_V^{-1}\left(\frac{\delta S_0}{\delta\ph^\al(x)}\right)$. We are now ready to prove the main result of this paper.
\begin{thm}
Assume that the {\qme} holds for $V\in \mathscr{V}_{S_1}(\Ocal) $ and that $\Lap(V)=0$. Let $F\in\fA_\loc(\Ocal)$, then
\be\label{Q:int}
\frac{i}{\hbar}[R_V(F),R_V(Q)]_\star\os R_V( s F-i\hbar\Lap_V(F))=R_V(\hat{s} F)\,.
\ee
In other words, $Q$ generates, with respect to $[.,.]_{\star_V}$, the quantum BV operator,
\be\label{Q:int2}
\frac{i}{\hbar}[F,Q]_{\star_V}\osV  \hat{s} F\,.
\ee
\begin{proof}
Using the GLZ relation \eqref{glz} we obtain
\[
\frac{1}{i\hbar}[R_V(F),R_V(Q)]_\star=R_V^{(1)}[F](Q)-R_V^{(1)}[Q](F)\,.
\]
We can always write a local functional $F\in\fA_\loc(\Ocal)$ in the form $F=\int F(x) d\mu(x)$. Using \eqref{current:cons}, \eqref{MWI2} and assuming that $\Lap({V(f)})$  was already removed, we find that
\[
R_V^{(1)}[dJ_0(\eta)](F)=\frac{d}{d\la}(e^{iV/\hbar}_{\sst \TT})^{-1\star}\star\left(e^{i(V+\la F)/\hbar}_{\sst \TT}\T dJ(\eta)\right)\Big|_{\la=0}\,,
\]
and we can use relation \eqref{MWI} to obtain
\begin{multline*}
e^{i(V+\la F)/\hbar}_{\sst \TT}\T dJ(\eta)=-i\hbar\int \eta(x)\left(e^{i(V+\la F)/\hbar}_{\sst \TT}\T \frac{\delta (\theta^\al+\la F)}{\delta\ph_\al^\ddagger(x)}\right)\star\frac{\delta S_0}{\delta \ph^\al(x)}d\mu(x)+\\
+i\hbar\sum_{n=0}^\infty\Lap^{n}((V(f)+\la F)^{\otimes n}; V(\eta)+\la F(\eta))+\\
-e^{i(V+\la F)/\hbar}_{\sst \TT}\T\left(\{\la F(\eta),S_0+V\}_{\sst \TT}+\la^2\int \frac{\delta F(\eta)}{\delta \ph_\al^\ddagger(x)}\T\frac{\delta F}{\delta \ph^\al(x)}\right)\,.
\end{multline*}
where $F(\eta)\doteq \int F(x)\eta(x) d\mu(x)$. Thus we obtain
\begin{multline}
R_V^{(1)}[dJ(\eta)](F)\os - \, R_V(\{F(\eta),S_\ex\}_{\sst \TT})+\\
+i\hbar R_V\left(\sum_{n=1}^\infty n\Lap^{n}\left(V(f)^{\otimes(n-1)}\otimes F;\theta(\eta)\right)\right)+\\
+i\hbar R_V\left(\sum_{n=1}^\infty \Lap^{n}\left(V(f)^{\otimes n};F(\eta)\right)\right)\,.\label{Wid}
\end{multline}
Similarly for $R_V^{(1)}[F](dJ_0(\eta))$. 

Now, we can make a particular choice for the function $\eta$ in the definition of $Q_0$. The current conservation implies that $Q_0$ is independent of this choice. We are interested in local algebras, so we can assume that $\Ocal$ is embedded in a spacetime $M$ with a compact Cauchy surface $\Sigma$. We pick two other Cauchy surfaces $\Sigma_{\pm}$, such that $\Sigma_-$ is in the past of   $\supp(F)$ and $\Sigma_+$ in its future. We choose a function $\eta$ such that, for any closed 3-form $\beta$, $\int_Md\eta\wedge\beta=\int_\Sigma\beta$ holds. Next, we take compactly supported functions $\eta_\pm$ such that $d\eta_\pm=d\eta+\chi_\pm$, where $\chi_\pm$ are supported in the future (past) of $\Sigma_+$ ($\Sigma_-$). Moreover we require that, on $\supp(F)$, $\eta_+=\eta$ and $\eta_-=\eta-1$ hold. An explicit construction of such functions is provided in \cite{DF99} for a flat $M$. Using the support property \eqref{supp:prop} of retarded products we find that
\begin{align*}
R_V^{(1)}[Q](F)-R_V^{(1)}[F](Q)&=-R_V^{(1)}\Big[\int_Md\eta_+\wedge J\Big](F)+R_V^{(1)}[F]\Big(\int_Md\eta_-\wedge J\Big)=\\
&=R_V^{(1)}\left[dJ(\eta_+)\right](F)-R_V^{(1)}[F]\left(dJ(\eta_-)\right)\,.
\end{align*}
Inserting \eqref{Wid} and using the properties of $\eta_\pm$ on the support of $F$ we obtain
\begin{align*}
[R_V(F),R_V(Q)]_\star&= -i\hbar\, (R_V(\{F,S_\ex\}_{\sst \TT})-i\hbar R_V\left(\Lap_V(F)\right))=\\
&=-i\hbar R_V(\hat{s}F)\,.
\end{align*}
\end{proof}
\end{thm}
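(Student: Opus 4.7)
My plan is to start from the GLZ identity \eqref{glz}, which lets me trade the $\star$-commutator for derivatives of the retarded map:
\[
\frac{1}{i\hbar}[R_V(F),R_V(Q)]_\star = R_V^{(1)}[F](Q) - R_V^{(1)}[Q](F).
\]
Since $Q$ is defined only up to choice of the 1-form $\alpha=d\eta$ dual to the Cauchy surface (thanks to current conservation), I intend to exploit the Dütsch--Fredenhagen trick of replacing $\eta$ by two different cutoffs $\eta_\pm$ adapted to the support of $F$: one that agrees with $\eta$ on $\supp F$ and one that agrees with $\eta-1$ there, with the difference $\chi_\pm$ supported strictly to the future/past of $\supp F$. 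The support property \eqref{supp:prop} of retarded products then kills the $\chi_\pm$-contributions, so that $R_V^{(1)}[Q](F)$ effectively becomes $R_V^{(1)}[dJ(\eta_+)](F)$ and $R_V^{(1)}[F](Q)$ becomes $R_V^{(1)}[F](dJ(\eta_-))$.

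The next step is to compute $R_V^{(1)}[dJ(\eta)](F)$ itself. Here I would plug the classical identity \eqref{dJ}, $dJ = \theta^\alpha \cdot \frac{\delta L_\ex}{\delta\ph^\alpha}$, into $e^{i(V+\lambda F)/\hbar}_\TT \T dJ(\eta)$ and apply the anomalous master Ward identity \eqref{MWI} together with the expansion \eqref{Lap:coeff} of the renormalized BV Laplacian. This converts the expression into a piece proportional to $\{F(\eta),S_\ex\}_\TT$ plus anomaly terms of the form $\Delta^n(V^{\otimes(n-1)}\otimes F; \theta(\eta))$ and $\Delta^n(V^{\otimes n}; F(\eta))$, modulo the ideal generated by $\frac{\delta S_0}{\delta\ph^\alpha}$ (which is why the equality is only on-shell). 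The key bookkeeping is then to recognise, via \eqref{LapX:coeff}, that the sum of these anomaly contributions is precisely $\Delta_V(F)$ once one uses $\eta_+-\eta_- \equiv -1$ on $\supp F$.

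Putting the pieces together, the difference $R_V^{(1)}[dJ(\eta_+)](F) - R_V^{(1)}[F](dJ(\eta_-))$ collapses, on $\supp F$, to $-R_V(\{F,S_\ex\}_\TT) + i\hbar R_V(\Delta_V F)$, which by \eqref{class:quant} equals $-R_V(\hat{s}F)$; multiplication by $-i\hbar$ yields \eqref{Q:int}. Formula \eqref{Q:int2} then follows by applying $R_V^{-1}$ and using the definition \eqref{starV} of $\star_V$ together with the identification of the ideal $R_V^{-1}(\frac{\delta S_0}{\delta\ph^\alpha})$ with $\frac{\delta(S_0+V)}{\delta\ph^\alpha}$ that underlies the $\osV$ equivalence.

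The main obstacle I expect is the careful handling of the anomaly terms: one must verify that the MWI applied to $e^{i(V+\lambda F)/\hbar}_\TT \T dJ(\eta)$ actually produces, after differentiation in $\lambda$ at zero and after the $\eta_\pm$ subtraction, exactly the combination \eqref{LapX:coeff} defining $\Delta_V(F)$, and not extra pieces. This is where the hypothesis $\Delta(V)=0$ enters: it removes the $\lambda$-independent anomaly of $V(f)$ itself, so that only the genuinely $F$-dependent anomaly survives and matches $\Delta_V(F)$. A secondary subtlety is making rigorous sense of $\frac{\delta\theta^\alpha(f)}{\delta\ph^\nu(z)}$ smeared against $H$ under the derivative sign; this is exactly where the consistency condition \eqref{const:cond3} of Subsection 3 is needed, so that the local functional $dJ(\eta)$ lies in a domain on which $R_V^{(1)}$ can be evaluated.
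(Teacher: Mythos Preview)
Your proposal is correct and follows essentially the same route as the paper: GLZ identity, the D\"utsch--Fredenhagen $\eta_\pm$ trick combined with the support property \eqref{supp:prop} of retarded products, application of the anomalous MWI to $e^{i(V+\lambda F)/\hbar}_{\sst\TT}\T dJ(\eta)$, and identification of the surviving anomaly terms with $\Lap_V(F)$ via \eqref{LapX:coeff}. Two minor corrections: on $\supp F$ one has $\eta_+-\eta_-=1$ (not $-1$), and the consistency condition \eqref{const:cond3} is not needed here---the theorem is proved in the setting where the free action is $S_0$ alone (no $\theta_0$), so $dJ(\eta)$ is an ordinary local functional and no extra hypothesis on $H$ is required for $R_V^{(1)}$ to be well defined.
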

Note that our result doesn't require $\Lap_V(F)$ to vanish. This means that, on-shell, $Q$ is a generator for the full \textit{quantum} BV operator. It is interesting to ask what modifications have to be made to allow for the situation in which $\Lap(V)$ is also non-zero. We consider it as a problem for future study.
\section*{Acknowledgements}
Most of the results presented in this paper were obtained during a Junior Hausdorff Trimester Program that took place during the fall 2012 at the Hausdorff Research Institute for Mathematics (HIM) in Bonn. Therefore, I would like to thank the participants of the program and the guests of HIM with whom I discussed during my stay. In particular, I gained from discussions with C. Dappiaggi, W. Dybalski, Ch. Fewster, K. Fredenhagen, S. Meinhardt, J. Schlemmer, Y. Tanimoto, M. Wrochna and J. Zahn (whom I also thank for important remarks on the first version of the manuscript). 

\end{document}